\DeclareMathOperator{\Cov}{Cov}
\DeclareMathOperator{\diag}{diag}
\numberwithin{equation}{section}
\theoremstyle{plain}
\newtheorem{theorem}{Theorem}[section]
\newtheorem{proposition}[theorem]{Proposition}
\newtheorem{corollary}[theorem]{Corollary}
\newtheorem{lemma}[theorem]{Lemma}
\theoremstyle{definition}
\newtheorem{assumption}[theorem]{Assumption}
\theoremstyle{remark}
\newtheorem{remark}[theorem]{Remark}
\newcommand{\R}{\mathbb{R}}
\newcommand{\E}{\mathbb{E}}
\newcommand{\Var}{\mathrm{Var}}
\newcommand{\Pbb}{\mathbb{P}}
\newcommand{\tr}{\mathrm{tr}}
\newcommand{\law}{\mathcal{L}}
\newcommand{\norm}[1]{\left\lVert #1\right\rVert}
\def\X{\boldsymbol{X}}
\def\U{\boldsymbol{U}}
\def\G{\mathbf{\Gamma}}
\def\bms{\mathbf{\Sigma}}
\newcommand{\Fthree}{\mathcal{F}_3}
\newcommand{\distthree}[2]{\left\lVert \law(#1)-\law(#2)\right\rVert_3}
\def\cd{\xrightarrow{d}}
\def\0{\boldsymbol{0}}
\def\1{\boldsymbol{1}}
\providecommand{\thanksnewlabel}[2]{}
\title{Note on High Dimensional Spatial-Sign Test for One Sample Problem}
\author{Ping Zhao \\
School of Mathematical Science, Tiangong University\\
and\\
Long Feng\\
School of Statistics and Data Science, \\
LEBPS, KLMDASR, and LPMC, Nankai University }
\begin{document}

\maketitle

\begin{abstract}
We revisit the null distribution of the high-dimensional spatial-sign test of \cite{wang2015high} under mild structural assumptions on the scatter matrix. We show that the standardized test statistic converges to a non-Gaussian limit, characterized as a mixture of a normal component and a weighted chi-square component. To facilitate practical implementation, we propose a wild bootstrap procedure for computing critical values and establish its asymptotic validity. Numerical experiments demonstrate that the proposed bootstrap test delivers accurate size control across a wide range of dependence settings and dimension–sample-size regimes.

\textit{Keywords:} degenerate $U$-statistic, elliptical symmetry, high dimensional data, spatial sign, wild bootstrap

\end{abstract}

% ======================================================================
\section{Introduction}

Testing mean vectors is a cornerstone problem in multivariate analysis, with ubiquitous roles in genomics, finance, neuroimaging, and many other areas where scientific questions are naturally phrased as location effects.
In the classical low-dimensional regime, Hotelling's $T^2$ test provides an optimal likelihood-based benchmark under multivariate normality \citep{hotelling1931generalization}.
However, modern applications routinely operate in the ``large $p$, small $n$'' or ``large $p$, comparable $n$'' regimes, where the sample covariance matrix is singular or ill-conditioned and Hotelling-type procedures become infeasible or unstable.

A fundamental line of work bypasses covariance inversion by constructing quadratic-form statistics based on the squared $\ell_2$-norm of the sample mean (or mean difference) and suitable bias correction.
Seminal examples include the two-sample statistic of \citet{bai1996effect}, and the U-statistic refinement of \citet{chenqin2010twosample}, which has motivated an extensive literature on high-dimensional mean testing. Please see an overview in \cite{huang2022overview}.

Despite their broad applicability, many asymptotic calibrations in this literature rely on nontrivial restrictions on the covariance eigenstructure (e.g., trace-type conditions that enforce asymptotic Gaussianity of quadratic forms) and typically assume light-tailed distributions with sufficiently high finite moments.
To relax covariance assumptions, several authors have proposed calibration schemes that adapt to non-Gaussian and nonstandard limits, including randomization-based approaches.
In particular, \citet{wangxu2019randomization} studied feasible high-dimensional randomization tests for mean vectors, and \citet{wang2022approximate} developed approximate randomization tests under weak conditions without imposing eigenstructure constraints in related two-sample settings. \noindent
Zhang et al.~\citep{zhang2020simple,zhang2022normalref} developed a \emph{normal-reference} line of tests for high-dimensional mean problems.
In particular, \citet{zhang2020simple} proposed an $L_2$-norm based two-sample procedure that remains valid under mild covariance conditions and calibrates the null distribution via a Welch--Satterthwaite $\chi^2$ approximation, which is designed to accommodate generally non-Gaussian quadratic-form limits. 
Building on the same principle, \citet{zhang2022normalref} provided a unified normal-reference framework for testing high-dimensional mean vectors across several settings and illustrated its practical performance in applications.
More recently, \citet{zhu2024two} studied the high-dimensional two-sample Behrens--Fisher problem and proposed a normal-reference $F$-type test, where the null law is approximated by an $F$-type mixture (a ratio of two independent chi-square-type mixtures) with consistently estimated degrees of freedom for implementation.

Heavy-tailed data and outliers present an additional challenge.
In finance and omics studies, distributional departures from Gaussianity are common, and moment-based procedures can suffer size distortions or power loss.
A robust alternative is to replace raw observations by \emph{spatial signs}, which retain directional information while downweighting large radii.
Under elliptically symmetric models, spatial-sign-based procedures are natural and often stable \citep{oja2010multivariate}.

In this direction, \citet{wang2015high} proposed a high-dimensional nonparametric test for the mean vector based on spatial signs, and \citet{zhou2019adaptiveSS} further developed an adaptive spatial-sign-based approach for elliptically distributed high-dimensional data, together with a chi-square approximation for calibration.
However, the limiting null law of spatial-sign quadratic forms can be non-Gaussian under general covariance structures: the limit may involve a mixture of a Gaussian component and an infinite weighted chi-square component, with weights determined by the (normalized) spectrum of the spatial-sign scatter.
Consequently, fixed-form approximations (normal or chi-square) can be inaccurate in regimes where a few leading eigen-directions are non-negligible.

This paper makes two contributions.
First, we reestablish the null limit distribution of the spatial-sign test statistic of \citet{wang2015high} under mild conditions on the spatial-sign scatter matrix, allowing for general eigenvalue configurations.
We show that the subsequential limits take a mixed form consisting of a Gaussian part and a weighted chi-square part.
Second, we propose a wild bootstrap procedure that consistently estimates critical values under elliptically symmetric distributions, thereby adapting automatically to all possible mixed limits.
We provide a detailed theoretical justification for the bootstrap consistency and demonstrate via simulations that the proposed method controls empirical sizes accurately across a wide range of covariance structures and distributional scenarios.

The remainder of the paper is organized as follows.
Section~\ref{sec:setup} introduces the model, the spatial-sign statistic, and the mixed-limit characterization.
Section~\ref{sec:wildbootstrap} presents the wild bootstrap and establishes its validity.
Section~\ref{sec:simulation} reports numerical results, and Section~\ref{sec:conclusion} concludes with discussions and future directions.

% ======================================================================
\section{Spatial Sign based Test}\label{sec:setup}
Let $\X_1,\cdots,\X_n$ follows the following elliptical symmetric distribution:
\begin{align}
\X_i=\boldsymbol{\mu}+r_i\G \boldsymbol{u}_i,
\end{align}
where the scatter matrix $\bms=\G\G^\top \in \mathbb{R}^{p\times p}$, $r_i\ge0$ is a scalar radius, $\boldsymbol{u}_i$ is uniform on the unit sphere $\mathbb S^{p-1}$, and $r_i$ is independent of $\boldsymbol{u}_i$. We consider the following hypothesis test problem:
\begin{align}
H_0:\boldsymbol{\mu}=\boldsymbol{0},~~ \text{versus},~~ H_1:\boldsymbol{\mu}\not=\boldsymbol{0}
\end{align}
The spatial sign function is defined as $U(\mathbf{x})=\|\mathbf{x}\|^{-1} \mathbf{x} I(\mathbf{x} \neq \mathbf{0})$. In traditional fixed $p$ circumstance, the following so-called "inner centering and inner standardization" sign-based procedure is usually used (cf., Chapter 6 of \cite{oja2010multivariate})
$$
Q_n^2=n p {\breve{\U}}^T \breve{\boldsymbol{U}},
$$
where $\breve{\boldsymbol{U}}=\frac{1}{n} \sum_{i=1}^n \hat{\boldsymbol{U}}_i, \hat{\boldsymbol{U}}_i=U\left(\mathbf{S}^{-1 / 2} \boldsymbol{X}_{i}\right), \mathbf{S}^{-1 / 2}$ are Tyler's scatter matrix (cf., Section 6.1.3 of \cite{oja2010multivariate}). $Q_n^2$ is affine-invariant and can be regarded as a nonparametric counterpart of Hotelling's $T^2$ test statistic by using the spatial-signs instead of the original observations $\boldsymbol{X}_{i j}$ 's. However, when $p>n, Q_n^2$ is not defined as the matrix $\mathbf{S}^{-1 / 2}$ is not available in high-dimensional settings. So \cite{wang2015high} defined their test statistic
$$
T_{W P L}=\frac{\sum_{i<j} \boldsymbol{U}_i^{\top} \boldsymbol{U}_j}{\sqrt{\frac{n(n-1)}{2}} \widehat{\operatorname{tr}\left(\mathbf{\Sigma}_U^2\right)}},
$$
where
$$
\widehat{\operatorname{tr}\left(\mathbf{\Sigma}_U^2\right)}=\frac{1}{n(n-1)} \operatorname{tr}\left\{\sum_{j \neq k}\left(\boldsymbol{U}_j-\overline{\boldsymbol{U}}_{(j, k)}\right) \boldsymbol{U}_j^{\top}\left(\boldsymbol{U}_k-\overline{\boldsymbol{U}}_{(j, k)}\right) \boldsymbol{U}_k^{\top}\right\}
$$
is an estimator of $\operatorname{tr}\left(\mathbf{\Sigma}_U^2\right)$. Here $\boldsymbol{U}_i=U(\X_i)$ and $\bms_U=E(\U_i\U_i^\top)$. And $\overline{\U}_{(j,k)}=\frac{1}{n-2}\sum_{i\not=j,k}\U_i$. Under some sparity structure assumption of the scatter matrix $\bms$, they establish the asymptotic normality of their test statistic, i.e. $T_{WPL}\cd N(0,1)$. However, when this assumption are not statisfied, as shown in \cite{zhou2019adaptiveSS}, the asympotic normality do not hold anymore and the empirical sizes of the $T_{WPL}$ test is often larger than the nominal level. For any fixed finite $p$, they show that
\begin{align*}
n\|\overline{\U}\|^2 \cd \sum_{r=1}^p \lambda_{p,r}A_r
\end{align*}
where $\lambda_{p,r}$ being the eigenvalues of $\bms_U$ and $A_1,\cdots,A_p$ being i.i.d $\chi_1^2$ random variables.  The above expression also holds for $p= \infty$ provided that $\lim _{p \cd \infty} \bms_U=\bms_{\infty}$ and $\lim _{p \cd \infty} \lambda_{p, r}= \lambda_{\infty, r}$ for all $r=1,2, \ldots$ uniformly where $\lambda_{\infty, r}$ 's are the eigenvalues of $\bms_{\infty}$.
Then, they used $\chi^2$-approximation to approximate the above limited null distribution and reject the null hypothesis if $\chi^2_{\hat{d}}>\hat{d}n\|\overline{\U}\|^2$ and $\hat{d}=n/((n-1)\widehat{\operatorname{tr}\left(\mathbf{\Sigma}_U^2\right)})$. Even the $\chi^2$-approximation is good choice, however, it is not the exact critical value. In the simulation studies, we found that it always has small size than the nonimal level. So, we would recalculate the limit null distribution of $\sum_{i<j} \boldsymbol{U}_i^{\top} \boldsymbol{U}_j$ and proposed two wild bootstrap method to calculate the critical value.

\subsection{Limit null distribution}
 Consider
\[
S_n=\sum_{1\le i<j\le n}\U_i^\top \U_j.
\]
Using $\norm{\U_i}^2=1$, we have the identity
\begin{equation}\label{eq:Sn-identity}
S_n=\frac12\left\|\sum_{i=1}^n \U_i\right\|^2-\frac{n}{2}.
\end{equation}

Set
\begin{align} \label{tn}
\tau:=\tr(\bms_U^2),\qquad
\sigma_n^2:=\binom{n}{2}\tau,\qquad
T_n:=\frac{S_n}{\sigma_n}.
\end{align}

Let $\Fthree$ be the class of functions $f:\R\to\R$ that are three times continuously differentiable and satisfy
\[
\|f^{(r)}\|_\infty\le 1,\qquad r=0,1,2,3.
\]
Define, for real random variables $X,Y$,
\begin{equation}\label{eq:dist3}
\distthree{X}{Y}
:=\sup_{f\in\Fthree}\bigl|\E f(X)-\E f(Y)\bigr|.
\end{equation}
Convergence in $\|\cdot\|_3$ implies weak convergence. Let $G_1,\ldots,G_n$ be i.i.d.\ $N(0,\bms_U)$, independent of $\{\U_i\}$. Define
\[
S_n^{(G)}:=\sum_{1\le i<j\le n}G_i^\top G_j,\qquad
T_n^{(G)}:=\frac{S_n^{(G)}}{\sigma_n}.
\]
Let $\xi\sim N(0,I_p)$ and define the Gaussian quadratic form
\begin{equation}\label{eq:Qp}
Q_p:=\frac{\xi^\top \bms_U \xi-\tr(\bms_U)}{\sqrt{2\tr(\bms_U^2)}}
=\frac{\xi^\top \bms_U \xi-1}{\sqrt{2\tau}}.
\end{equation}

Define
\begin{equation}\label{eq:kappa4}
\kappa_4
:=\frac{\E[(U_1^\top U_2)^4]}{\bigl(\E[(U_1^\top U_2)^2]\bigr)^2}
=\frac{\E[(U_1^\top U_2)^4]}{\tau^2}.
\end{equation}

\begin{theorem}\label{thm:universality}
Assume $\Pbb(X_1=0)=0$ and $\tau=\tr(\bms_U^2)>0$. Then
\begin{equation}\label{eq:univ-bound}
\distthree{T_n}{T_n^{(G)}}
\ \le\ C\,\kappa_4^{3/4}\,n^{-1/2},
\end{equation}
for a universal constant $C>0$. Consequently, if $\kappa_4=o(n^{2/3})$, then
\[
\distthree{T_n}{T_n^{(G)}}\to 0.
\]
\end{theorem}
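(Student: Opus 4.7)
The plan is a Lindeberg one-at-a-time interpolation between the spatial signs $\U_1,\ldots,\U_n$ and the Gaussian surrogates $G_1,\ldots,G_n$. For $k=0,1,\ldots,n$ let $Y_i^{(k)}:=G_i$ if $i\le k$ and $Y_i^{(k)}:=\U_i$ if $i>k$, and define the hybrid statistic $T_n^{(k)}:=\sigma_n^{-1}\sum_{i<j}(Y_i^{(k)})^\top Y_j^{(k)}$, so that $T_n^{(0)}=T_n$ and $T_n^{(n)}=T_n^{(G)}$. For any $f\in\Fthree$, telescoping gives
\[
\E f(T_n)-\E f(T_n^{(G)})=\sum_{k=1}^n\E\bigl[f(T_n^{(k-1)})-f(T_n^{(k)})\bigr],
\]
and consecutive hybrids differ only in the $k$-th coordinate, with $\U_k$ replaced by $G_k$.

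To treat the $k$-th swap I would isolate the index-$k$ contribution via $\sum_{i<j}Y_i^\top Y_j=Y_k^\top M_k+\Psi_k$, where $M_k:=\sum_{i\ne k}Y_i^{(k-1)}$ (which coincides with $\sum_{i\ne k}Y_i^{(k)}$) and $\Psi_k$ collects the terms not involving index $k$; both $M_k$ and $\Psi_k$ are independent of the pair $(\U_k,G_k)$. Setting $H_k:=\Psi_k/\sigma_n$, I get $T_n^{(k-1)}=H_k+\U_k^\top M_k/\sigma_n$ and $T_n^{(k)}=H_k+G_k^\top M_k/\sigma_n$. Expanding $f$ to third order about $H_k$, conditioning on $(H_k,M_k)$, and using the matching moments $\E[\U_k]=\E[G_k]=\0$ and $\E[\U_k\U_k^\top]=\E[G_kG_k^\top]=\bms_U$, the zeroth-, first-, and second-order contributions cancel in expectation. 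Since $\|f'''\|_\infty\le 1$, the Taylor remainder then produces
\[
\bigl|\E f(T_n)-\E f(T_n^{(G)})\bigr|\le\frac{1}{6\sigma_n^3}\sum_{k=1}^n\Bigl(\E|\U_k^\top M_k|^3+\E|G_k^\top M_k|^3\Bigr).
\]

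It remains to bound $\E|Z^\top M_k|^3$ for $Z\in\{\U_k,G_k\}$, which by H\"older reduces to controlling $\E[(Z^\top M_k)^4]$. Conditioning on $Z$, the summands $Z^\top Y_i^{(k-1)}$ (for $i\ne k$) are independent, centered, with common conditional variance $s^2:=Z^\top\bms_U Z$, so the classical fourth-moment identity gives
\[
\E\bigl[(Z^\top M_k)^4\mid Z\bigr]=\sum_{i\ne k}\E\bigl[(Z^\top Y_i^{(k-1)})^4\mid Z\bigr]+3(n-1)(n-2)\,s^4.
\]
Each unconditional cross term $\E[(Z^\top Y_i^{(k-1)})^4]$ is at most $C\kappa_4\tau^2$: when both vectors are spatial signs this is the definition of $\kappa_4$ in (\ref{eq:kappa4}); when a Gaussian factor is present, conditioning on it reduces the fourth moment to $3\,\E[(W^\top\bms_U W)^2]$ for the remaining factor $W\in\{\U,G\}$, which is bounded either by Jensen (for spatial-sign $W$, using $W^\top\bms_U W=\E[(W^\top\U)^2\mid W]$ for an independent spatial sign $\U$) or by the Gaussian quadratic-form identity $\E[(W^\top\bms_U W)^2]=\tau^2+2\tr(\bms_U^4)\le 3\tau^2$ together with $\kappa_4\ge 1$. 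The same two estimates also give $\E[s^4]\le C\kappa_4\tau^2$. Combining, $\E[(Z^\top M_k)^4]\le Cn^2\kappa_4\tau^2$, hence $\E|Z^\top M_k|^3\le Cn^{3/2}\kappa_4^{3/4}\tau^{3/2}$. Summing over $k$ and using $\sigma_n^3\asymp n^3\tau^{3/2}$ yields the claimed bound $C\kappa_4^{3/4}n^{-1/2}$, and the consequence $\distthree{T_n}{T_n^{(G)}}\to 0$ under $\kappa_4=o(n^{2/3})$ is then immediate. I expect the fourth-moment bookkeeping for the conditional quadratic forms $W^\top\bms_U W$ to be the main technical obstacle: that is precisely where $\kappa_4$ enters the estimate and sets the sharpness of the rate, while the remaining Lindeberg exchange is standard.
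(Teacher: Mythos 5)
Your proposal is correct and follows essentially the same route as the paper: a Lindeberg one-at-a-time swap of $\U_k$ for $G_k$, third-order Taylor expansion with the zeroth-, first-, and second-order terms cancelling by moment matching, and a reduction of the cubic remainder to fourth moments controlled by $\kappa_4$, summed over $k$ against $\sigma_n^3$. The one place where you are actually more careful than the paper is the fourth-moment bookkeeping: the paper invokes a Rosenthal-type bound $\E(\Delta_k^4)\le C_0(\mathrm{Inf}_k^2+\sum_{j\neq k}\E[w_{kj}^4])$, whose second term should really carry $\E[(U_k^\top\bms_U U_k)^2]/\tau^2$ (bounded by $\kappa_4$, not $1$) rather than the deterministic $\mathrm{Inf}_k^2$; your explicit conditioning on $Z$, the exact identity $\E[(Z^\top M_k)^4\mid Z]=\sum_{i\neq k}\E[(Z^\top Y_i)^4\mid Z]+3(n-1)(n-2)s^4$ with $s^2=Z^\top\bms_U Z$, and the Jensen/Gaussian-quadratic-form bounds $\E[s^4]\le\kappa_4\tau^2$ make the dependence on $\kappa_4$ transparent and close that small gap cleanly. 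The constants and final rate $C\kappa_4^{3/4}n^{-1/2}$ agree with the paper.
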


\begin{theorem}\label{thm:gauss-reduction}
With $T_n^{(G)}$ and $Q_p$ defined in \eqref{eq:Qp},
\begin{equation}\label{eq:gauss-reduction}
T_n^{(G)}=Q_p+o_p(1),
\qquad n\to\infty,
\end{equation}
uniformly over all positive semidefinite $\bms_U$ with $\tr(\bms_U)=1$.
\end{theorem}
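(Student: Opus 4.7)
The plan is to apply the polarization identity
\begin{equation*}
2 S_n^{(G)} = \norm{\sum_{i=1}^n G_i}^2 - \sum_{i=1}^n\norm{G_i}^2,
\end{equation*}
match the first piece with $Q_p$ via a Gaussian coupling, and dispose of the second piece by a one-line variance bound. This separates out exactly the Gaussian quadratic form appearing in $Q_p$ from a zero-mean sum of squared norms that is too small to survive the normalization.

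For the leading piece, I would write $G_i=\bms_U^{1/2} Z_i$ with $Z_i\sim N(\0,I_p)$ i.i.d.\ and set $\xi:=n^{-1/2}\sum_i Z_i\sim N(\0,I_p)$, so that $\norm{\sum_i G_i}^2 = n\,\xi^\top\bms_U\xi$ holds pathwise and this $\xi$ is the one entering $Q_p$. Using $\tr(\bms_U)=1$, this immediately gives
\begin{equation*}
\frac{\norm{\sum_i G_i}^2 - n\tr(\bms_U)}{2\sigma_n} = \sqrt{\tfrac{n}{n-1}}\,Q_p,
\end{equation*}
which equals $Q_p+o_p(1)$ uniformly in $\bms_U$ because $Q_p$ has variance $1$ (hence is uniformly tight) and the prefactor differs from $1$ by $O(n^{-1})$. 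For the diagonal remainder $R_n:=(2\sigma_n)^{-1}\sum_i(\norm{G_i}^2-\tr(\bms_U))$, each centered Gaussian quadratic form has variance $2\tr(\bms_U^2)=2\tau$, giving $\Var(R_n)=n\cdot 2\tau/(4\sigma_n^2)=1/(n-1)$, so Chebyshev yields $R_n=o_p(1)$ uniformly in $\bms_U$ at rate $n^{-1/2}$.

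I do not foresee a real obstacle: the entire argument reduces to two elementary moment calculations, and uniformity over all PSD $\bms_U$ with $\tr(\bms_U)=1$ is automatic because $\tau$ enters identically in the variances and in the normalization $\sigma_n^2=\binom{n}{2}\tau$, so it cancels and the resulting bounds depend only on $n$. The only mild care needed lies in the coupling, since one must use the same standard Gaussian vector $\xi$ to drive both the quadratic form defining $Q_p$ and the identity $\norm{\sum_i G_i}^2=n\,\xi^\top\bms_U\xi$; the representation $G_i=\bms_U^{1/2}Z_i$ with $\xi=n^{-1/2}\sum_i Z_i$ handles this cleanly, and in degenerate directions (where $\bms_U$ has zero eigenvalues) the quadratic form is insensitive to the components of $\xi$ involved, so no invertibility of $\bms_U$ is required.
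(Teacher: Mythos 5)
Your proof is correct and follows essentially the same route as the paper's: the polarization identity for $S_n^{(G)}$, the reduction $\bigl\|\sum_i G_i\bigr\|^2 = n\,\xi^\top\bms_U\xi$, and the variance computation $\Var\bigl(R_n/(2\sigma_n)\bigr)=1/(n-1)$ on the diagonal remainder. The only cosmetic difference is that you explicitly carry through the $\sqrt{n/(n-1)}$ prefactor and justify its harmlessness via uniform tightness of $Q_p$, whereas the paper absorbs this step informally by writing $\sigma_n\sim n\sqrt{\tau/2}$.
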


Based on Theorems~\ref{thm:universality}--\ref{thm:gauss-reduction}, we can directly have the following corollary.
\begin{corollary}\label{cor:quad-approx}
Under the assumptions of Theorems~\ref{thm:universality}--\ref{thm:gauss-reduction} and $\kappa_4=o(n^{2/3})$,
\[
\distthree{T_n}{Q_p}\to 0.
\]
\end{corollary}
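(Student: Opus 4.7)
The plan is to combine the two preceding theorems via a triangle inequality for the metric $\distthree{\cdot}{\cdot}$. Since $\|\cdot\|_3$ is defined as a supremum of $|\E f(X)-\E f(Y)|$ over a fixed class of functions, it is a pseudometric on laws and in particular satisfies
\[
\distthree{T_n}{Q_p}\ \le\ \distthree{T_n}{T_n^{(G)}}+\distthree{T_n^{(G)}}{Q_p}.
\]
It therefore suffices to show that each term on the right vanishes as $n\to\infty$.

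For the first term, I would invoke Theorem~\ref{thm:universality}, which gives the explicit bound $C\kappa_4^{3/4}n^{-1/2}$. The hypothesis $\kappa_4=o(n^{2/3})$ is equivalent to $\kappa_4^{3/4}=o(n^{1/2})$, so this term is $o(1)$ with no further work.

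For the second term, the substantive step is to upgrade the in-probability statement of Theorem~\ref{thm:gauss-reduction} to $\|\cdot\|_3$-convergence. Here I would exploit the fact that any $f\in\Fthree$ satisfies $\|f\|_\infty\le 1$ and $\|f'\|_\infty\le 1$, hence
\[
|f(T_n^{(G)})-f(Q_p)|\ \le\ \min\bigl\{2,\ |T_n^{(G)}-Q_p|\bigr\}\ \le\ \varepsilon+2\cdot\one\{|T_n^{(G)}-Q_p|>\varepsilon\}
\]
for every $\varepsilon>0$. Taking expectations and then the supremum over $f\in\Fthree$ yields
\[
\distthree{T_n^{(G)}}{Q_p}\ \le\ \varepsilon+2\Pbb\bigl(|T_n^{(G)}-Q_p|>\varepsilon\bigr).
\]
By Theorem~\ref{thm:gauss-reduction}, the probability on the right tends to zero for each fixed $\varepsilon$; letting $n\to\infty$ and then $\varepsilon\to 0$ gives $\distthree{T_n^{(G)}}{Q_p}\to 0$.

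I do not anticipate a genuine obstacle, since the argument is a textbook combination of the two theorems. The only point that requires a moment of care is that $\distthree{\cdot}{\cdot}$ is strictly stronger than weak convergence, so the step from the Slutsky-style statement $T_n^{(G)}=Q_p+o_p(1)$ to $\distthree{T_n^{(G)}}{Q_p}\to 0$ needs the explicit bounded-and-Lipschitz truncation displayed above rather than a blanket appeal to Slutsky's theorem. Once both terms are controlled, the triangle inequality closes the argument.
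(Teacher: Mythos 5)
Your argument is correct and is essentially the one the paper leaves implicit (the paper only asserts that the corollary follows ``directly'' from Theorems~\ref{thm:universality}--\ref{thm:gauss-reduction}): triangle inequality for $\distthree{\cdot}{\cdot}$, the explicit bound $C\kappa_4^{3/4}n^{-1/2}\to 0$ for the first term, and the bounded-Lipschitz truncation to upgrade $T_n^{(G)}-Q_p=o_p(1)$ to $\distthree{T_n^{(G)}}{Q_p}\to 0$ for the second. You correctly identify the one point of care, namely that $o_p(1)$ alone does not automatically yield $\|\cdot\|_3$-convergence without the $\min\{2,|x-y|\}$ envelope supplied by $\|f\|_\infty\le 1$ and $\|f'\|_\infty\le 1$.
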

Hence the asymptotic behavior of $T_n$ is determined by $\bms_U$ through the Gaussian quadratic form $Q_p$, with no eigenstructure assumptions on $\bms$.

According to the Corollary 1 in \cite{wang2022approximate} about the limit distribution of the Gaussian quadratic form, we have the following excat distribution of our test statistic $T_n$.

\begin{corollary}\label{cor:all-limits}
Let $\lambda_{1}\ge\cdots\ge\lambda_{p}\ge0$ be eigenvalues of $\bms_U$, and define normalized weights
\[
\alpha_i:=\frac{\lambda_i}{\sqrt{\sum_{j=1}^p\lambda_j^2}}
=\frac{\lambda_i}{\sqrt{\tau}},
\qquad \sum_{i=1}^p\alpha_i^2=1.
\]
Along any subsequence for which $\alpha_i\to \alpha_i^\star$ for each fixed $i$ (allowing infinitely many nonzero limits),
\[
Q_p \cd\left(1-\sum_{i\ge1}(\alpha_i^\star)^2\right)^{1/2}Z_0
+\frac{1}{\sqrt2}\sum_{i\ge1}\alpha_i^\star\,(Z_i^2-1) \doteq T_\infty,
\]
where $Z_0,Z_1,Z_2,\ldots$ are i.i.d.\ $N(0,1)$. The same subsequential limits hold for $T_n$ whenever $\distthree{T_n}{Q_p}\to0$.
\end{corollary}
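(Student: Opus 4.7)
The plan is to reduce the problem to the subsequential-limit theorem for Gaussian quadratic forms from \citet{wang2022approximate}, and then to transfer that limit to $T_n$ via Corollary~\ref{cor:quad-approx}.

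First, I would diagonalize $\bms_U=O\Lambda O^\top$ with $\Lambda=\diag(\lambda_1,\ldots,\lambda_p)$, set $Y=O^\top\xi\sim N(\0,I_p)$, and use $\tr(\bms_U)=1$ to obtain the representation
\[
Q_p=\frac{1}{\sqrt 2}\sum_{i=1}^p\alpha_i\,(Y_i^2-1),\qquad \sum_{i=1}^p\alpha_i^2=1.
\]
Because $\alpha_i\in[0,1]$, a standard diagonal extraction produces a subsequence along which $\alpha_i\to\alpha_i^\star$ for every fixed $i$, with $\sum_{i\ge 1}(\alpha_i^\star)^2\le 1$ by Fatou's lemma. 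Along this subsequence, Corollary~1 of \citet{wang2022approximate} applied to the Gaussian quadratic form yields
\[
Q_p\ \cd\ \Bigl(1-\sum_{i\ge 1}(\alpha_i^\star)^2\Bigr)^{1/2}Z_0+\frac{1}{\sqrt 2}\sum_{i\ge 1}\alpha_i^\star\,(Z_i^2-1)=T_\infty,
\]
where the infinite series converges in $L^2$ since $\sum_i(\alpha_i^\star)^2<\infty$ and the summands are independent with respective variances $2(\alpha_i^\star)^2$.

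Second, I would transfer the limit to $T_n$ using the triangle inequality
\[
\distthree{T_n}{T_\infty}\le \distthree{T_n}{Q_p}+\distthree{Q_p}{T_\infty}.
\]
The first term vanishes by Corollary~\ref{cor:quad-approx}. For the second term, every $f\in\Fthree$ satisfies $\|f\|_\infty\le 1$ and $\|f'\|_\infty\le 1$, so $\Fthree$ is contained in the standard bounded-Lipschitz class that metrizes weak convergence on $\R$; since $\Var(Q_p)=1$, the family $\{\law(Q_p)\}$ is tight, and the weak convergence $Q_p\cd T_\infty$ therefore upgrades to $\distthree{Q_p}{T_\infty}\to 0$ along the chosen subsequence. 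Combining yields $\distthree{T_n}{T_\infty}\to 0$, and in particular $T_n\cd T_\infty$ along the same subsequence.

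The main obstacle I anticipate is the second step: upgrading weak convergence of $Q_p$ to $\|\cdot\|_3$-convergence. I would handle this by exploiting tightness of $\{\law(Q_p)\}$ to restrict the supremum over $\Fthree$ to a compact interval carrying mass $\ge 1-\varepsilon$, and then using the uniform $1$-Lipschitz property of $\Fthree$ (via Arzel\`a--Ascoli) to approximate the supremum by finitely many test functions, each of which is handled by the Portmanteau theorem.
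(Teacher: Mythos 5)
Your proposal is correct and is essentially the route the paper takes: the paper simply cites Corollary~1 of \citet{wang2022approximate} for the Gaussian quadratic-form limit and couples it with Corollary~\ref{cor:quad-approx}, which is exactly what you do, with the spectral representation $Q_p=\tfrac{1}{\sqrt2}\sum_i\alpha_i(Y_i^2-1)$ and the triangle-inequality transfer spelled out. The one remark worth making is that the last step is slightly more than needed: since $\distthree{T_n}{Q_p}\to 0$ already implies $\E f(T_n)-\E f(Q_p)\to 0$ for every $f\in\Fthree$, and $\Fthree$ is a convergence-determining class, $T_n\cd T_\infty$ follows directly from $Q_p\cd T_\infty$ without first upgrading to $\distthree{Q_p}{T_\infty}\to 0$; your extra argument (via $\Fthree\subseteq\mathrm{BL}_1$ and tightness) is valid but not required for the corollary as stated.
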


\begin{corollary}[A sufficient condition for asymptotic normality]\label{cor:clt}
If
\[
\max_{1\le i\le p}\alpha_i \to 0
\quad\Longleftrightarrow\quad
\frac{\tr(\bms_U^4)}{\tr(\bms_U^2)^2}\to 0,
\]
then $Q_p\cd N(0,1)$, and hence $T_n\cd N(0,1)$ under $\distthree{T_n}{Q_p}\to0$.
\end{corollary}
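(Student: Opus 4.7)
The plan is to diagonalize $\bms_U$ so that $Q_p$ becomes a weighted sum of i.i.d.\ centered chi-square variables, verify a Lyapunov condition under the stated assumption, and then transfer the Gaussian limit from $Q_p$ to $T_n$ through the $\|\cdot\|_3$ distance. The equivalence of the two formulations of the assumption will fall out of elementary spectral identities.

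First I would write $\bms_U = P\,\diag(\lambda_1,\ldots,\lambda_p)\,P^\top$; since $\xi\sim N(0,I_p)$, the rotated vector $Z:=P^\top\xi$ is again standard Gaussian, and using $\tr(\bms_U)=1$,
\[
Q_p \;=\; \frac{\sum_{i=1}^p \lambda_i(Z_i^2-1)}{\sqrt{2\tau}}\;=\;\sum_{i=1}^p \alpha_i\,W_i,\qquad W_i := \frac{Z_i^2-1}{\sqrt 2}.
\]
The $W_i$ are i.i.d.\ with mean $0$, variance $1$, and finite fourth moment $c:=\E W_1^4$; combined with $\sum_i\alpha_i^2=1$ this gives $\Var(Q_p)=1$. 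For the Lyapunov condition with $\delta=2$,
\[
\sum_{i=1}^p \alpha_i^4\,\E[W_i^4]\;=\;c\sum_{i=1}^p \alpha_i^4\;\le\; c\,\bigl(\max_i\alpha_i^2\bigr)\sum_i\alpha_i^2\;=\;c\,\max_i\alpha_i^2\longrightarrow 0,
\]
so the Lyapunov CLT yields $Q_p\cd N(0,1)$. The equivalence of conditions follows from $\max_i\alpha_i^2=\lambda_1^2/\tau$ and $\sum_i\alpha_i^4=\tr(\bms_U^4)/\tau^2$, together with the sandwich $\max_i\alpha_i^4 \le \sum_i\alpha_i^4 \le \max_i\alpha_i^2$.

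It then remains to transfer this limit to $T_n$. By the triangle inequality $\distthree{T_n}{Z}\le\distthree{T_n}{Q_p}+\distthree{Q_p}{Z}$ with $Z\sim N(0,1)$, the first term vanishes by hypothesis, and since every $f\in\Fthree$ is uniformly bounded and uniformly Lipschitz, the weak convergence $Q_p\cd Z$ together with tightness upgrades to $\distthree{Q_p}{Z}\to 0$ by an Arzel\`a--Ascoli/Portmanteau argument on the equicontinuous family $\Fthree$. Since $\|\cdot\|_3$-convergence implies weak convergence (as noted in the paper), $T_n\cd N(0,1)$ follows. The step I expect to demand the most care is precisely this last upgrade from classical weak convergence of $Q_p$ to convergence in the smoothed distance $\|\cdot\|_3$; a cleaner quantitative alternative is to establish $\distthree{Q_p}{Z}\lesssim \max_i\alpha_i$ directly via a Stein-type bound tailored to weighted sums of i.i.d.\ centered $\chi^2_1-1$ variables, which then plugs into the triangle inequality without invoking any soft compactness argument.
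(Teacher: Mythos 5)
Your argument is correct, and it is more self-contained than the route the paper implicitly takes. The paper never proves Corollary~\ref{cor:clt} directly; it follows from Corollary~\ref{cor:all-limits}, which itself is cited from \citet{wang2022approximate}: under $\max_i\alpha_i\to0$ one has $\alpha_i^\star=0$ for every fixed $i$, so the mixed limit $T_\infty$ collapses to $Z_0\sim N(0,1)$, and the last sentence of Corollary~\ref{cor:all-limits} transfers this to $T_n$. Your direct Lyapunov computation for the weighted sum $Q_p=\sum_i\alpha_iW_i$ with $W_i=(Z_i^2-1)/\sqrt2$ makes the Gaussian limit explicit without invoking that external result, and the spectral sandwich $\max_i\alpha_i^4\le\sum_i\alpha_i^4\le\max_i\alpha_i^2$ cleanly delivers the stated equivalence $\max_i\alpha_i\to0\iff\tr(\bms_U^4)/\tr(\bms_U^2)^2\to0$. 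All of this is sound.

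The one place you overcomplicate is the transfer from $Q_p\cd N(0,1)$ to $T_n\cd N(0,1)$. You set out to prove the stronger claim $\distthree{Q_p}{Z}\to0$ (via Arzel\`a--Ascoli on $\Fthree$, or alternatively a Stein-type quantitative bound), but neither is needed for the corollary, which only asserts weak convergence of $T_n$. The short argument is: for each \emph{fixed} $f\in\Fthree$,
\[
\bigl|\E f(T_n)-\E f(Z)\bigr|
\ \le\ \underbrace{\distthree{T_n}{Q_p}}_{\to 0 \ \text{by hypothesis}}
\ +\ \underbrace{\bigl|\E f(Q_p)-\E f(Z)\bigr|}_{\to 0 \ \text{since } Q_p\cd Z \text{ and } f \text{ is bounded continuous}},
\]
so $\E f(T_n)\to\E f(Z)$ for every $f\in\Fthree$; since $\Fthree$ (which contains, after rescaling, all smooth compactly supported functions) is a convergence-determining class, $T_n\cd N(0,1)$. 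No uniformity over $\Fthree$, no tightness, and no compactness argument are required. The soft compactness argument and the Stein bound you sketch are both valid, but they prove a strictly stronger statement than what the corollary claims and can be dropped.
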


Next, we give some discussion about the parameter $\kappa_4$.

\begin{proposition}[A universal bound]\label{prop:kappa4-universal}
For any symmetric distribution on the unit sphere with $\E U=0$ and $\tau>0$,
\[
1\ \le\ \kappa_4\ \le\ \frac{1}{\tau}\ \le\ p .
\]
\end{proposition}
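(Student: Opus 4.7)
The plan is to verify the three inequalities in order, each by an elementary calculation; there is no serious obstacle, and the main point is to identify which property of $U$ (spherical support, symmetry, independence across replicates) is being used at each step.

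First I would establish the two key identities that make the bounds transparent. Using independence of $U_1$ and $U_2$ and the trace trick,
\begin{equation*}
\E[(U_1^\top U_2)^2]
=\E\bigl[\tr(U_1U_1^\top U_2U_2^\top)\bigr]
=\tr\bigl(\E[U_1U_1^\top]\,\E[U_2U_2^\top]\bigr)
=\tr(\bms_U^2)=\tau,
\end{equation*}
and, since $\|U\|=1$ almost surely, $\tr(\bms_U)=\E\|U\|^2=1$. These two facts drive everything.

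For the lower bound $\kappa_4\ge1$, I would apply Cauchy--Schwarz (or Jensen for the convex map $x\mapsto x^2$) to the scalar random variable $X=(U_1^\top U_2)^2$:
\begin{equation*}
\E[(U_1^\top U_2)^4]=\E[X^2]\ \ge\ (\E X)^2=\bigl(\E[(U_1^\top U_2)^2]\bigr)^2=\tau^2,
\end{equation*}
so dividing by $\tau^2$ gives $\kappa_4\ge1$. For the middle bound $\kappa_4\le1/\tau$, I would use that $U_1,U_2$ lie on the unit sphere, so $|U_1^\top U_2|\le1$ almost surely, hence $(U_1^\top U_2)^4\le(U_1^\top U_2)^2$. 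Taking expectations,
\begin{equation*}
\E[(U_1^\top U_2)^4]\ \le\ \E[(U_1^\top U_2)^2]=\tau,
\end{equation*}
and dividing by $\tau^2$ yields $\kappa_4\le 1/\tau$.

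Finally, for $1/\tau\le p$, I would diagonalize $\bms_U$ with eigenvalues $\lambda_1,\dots,\lambda_p\ge 0$. Since $\tr(\bms_U)=1$, Cauchy--Schwarz gives
\begin{equation*}
1=\Bigl(\sum_{i=1}^{p}\lambda_i\Bigr)^2
\le p\sum_{i=1}^{p}\lambda_i^2=p\,\tau,
\end{equation*}
so $\tau\ge 1/p$, i.e.\ $1/\tau\le p$. Chaining the three inequalities delivers $1\le\kappa_4\le1/\tau\le p$. The only place where the structural assumptions enter is the spherical support (used for the middle bound and for $\tr(\bms_U)=1$); symmetry and $\E U=0$ are not actually needed for the inequalities themselves, though they are standing assumptions of the paper.
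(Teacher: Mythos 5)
Your proof is correct and follows essentially the same route as the paper: Jensen/Cauchy--Schwarz for $\kappa_4\ge 1$, the pointwise bound $(U_1^\top U_2)^4\le (U_1^\top U_2)^2$ (since $|U_1^\top U_2|\le 1$) for $\kappa_4\le 1/\tau$, and the trace/eigenvalue Cauchy--Schwarz inequality for $1/\tau\le p$. The only difference is that you make explicit the identity $\E[(U_1^\top U_2)^2]=\tr(\bms_U^2)=\tau$ and spell out the eigenvalue argument, which the paper leaves implicit; your side remark that the symmetry and $\E U=0$ assumptions are not actually used here is also accurate.
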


\begin{proof}
Let $V:=U_1^\top U_2\in[-1,1]$ and $W:=V^2\in[0,1]$. By Jensen's inequality,
$\E(W^2)\ge \E(W)^2$, hence $\kappa_4=\E(W^2)/\E(W)^2\ge 1$.
Moreover, since $0\le W\le 1$ we have $W^2\le W$, so $\E(W^2)\le \E(W)=\tau$ and
$\kappa_4\le \tau/\tau^2=1/\tau$.
Finally, for any positive semidefinite matrix with $\tr(\bms_U)=1$,
$\tr(\bms_U^2)\ge 1/p$ (minimized at $\bms_U=I_p/p$), hence $1/\tau\le p$.
\end{proof}

\begin{remark}[Spherical case]\label{rem:spherical}
If $\bms\propto \mathbf{I}_p$, then $U$ is uniform on $\mathbb S^{p-1}$ and
\[
\E[(\U_1^\top \U_2)^2]=\frac{1}{p},
\qquad
\E[(\U_1^\top \U_2)^4]=\frac{3}{p(p+2)},
\qquad
\kappa_4=\frac{3p}{p+2}\to 3.
\]
\end{remark}

\begin{remark}[When $\kappa_4$ is typically bounded for ACG directions]\label{rem:kappa4-bounded}
For angular central Gaussian (ACG) directions
$\U=\bms^{1/2}S/\sqrt{S^\top \bms S}$ with $S\sim\mathrm{Unif}(\mathbb S^{p-1})$,
a common high-dimensional regime is the effective-rank condition
\[
\frac{\tr(\bms^2)}{\tr(\bms)^2}\to 0
\qquad
\bigl(\text{equivalently } r_{\mathrm{eff}}(\bms):=\tr(\bms)^2/\tr(\bms^2)\to\infty\bigr).
\]
Under this condition, the quadratic form $S^\top\bms S$ concentrates around $\tr(\bms)/p$,
and one can show that for any deterministic unit vector $a\in\mathbb S^{p-1}$,
\[
\frac{\E[(a^\top U)^4]}{\bigl(\E[(a^\top U)^2]\bigr)^2}
=\frac{3p}{p+2}\,(1+o(1))\to 3.
\]
Combined with a mild delocalization condition on $\bms_U$
(e.g., $\tr(\bms_U^4)/\tr(\bms_U^2)^2\to 0$), this yields $\kappa_4\to 3$.
We emphasize that this argument does \emph{not} require a bounded condition number for $\bms$;
it only requires that the spectral energy of $\bms$ is not concentrated in finitely many directions.
\end{remark}

\begin{remark}
Assume the scatter matrix is compound symmetric,
\[
\bms=\sigma^2\{(1-\rho)I_p+\rho \mathbf 1\mathbf 1^\top\},\qquad -1/(p-1)<\rho<1,
\]
and let $u_0=\mathbf 1/\sqrt p$. Write $S=Z u_0+\sqrt{1-Z^2}\,V$ with
$S\sim\mathrm{Unif}(\mathbb S^{p-1})$, where $Z=u_0^\top S$ and
$V\in u_0^\perp\cap\mathbb S^{p-2}$ is uniform and independent of $Z$.
For $U=\bms^{1/2}S/\sqrt{S^\top\bms S}$, define $A=u_0^\top U$ and $B=\sqrt{1-A^2}$.
Then $U=A u_0+B\widetilde V$ with $\widetilde V\in u_0^\perp\cap\mathbb S^{p-2}$ uniform
conditional on $(A,B)$.

Let $T=Z^2\sim \mathrm{Beta}(1/2,(p-1)/2)$ and
\[
\gamma=\frac{\rho p}{1-\rho}.
\]
A direct eigen-decomposition yields
\[
A^2=\frac{(1+\gamma)T}{1+\gamma T}.
\]
Denote $m_2=\E(A^2)$ and $m_4=\E(A^4)$.
For $V:=U_1^\top U_2$ with independent copies $U_1,U_2$, we have the exact identities
\[
\E(V^2)=m_2^2+\frac{(1-m_2)^2}{p-1},
\]
and
\[
\E(V^4)=m_4^2+\frac{6}{p-1}(m_2-m_4)^2+\frac{3}{(p-1)(p+1)}(1-2m_2+m_4)^2.
\]
Therefore,
\[
\kappa_4=\frac{\E(V^4)}{\{\E(V^2)\}^2}
=
\frac{m_4^2+\frac{6}{p-1}(m_2-m_4)^2+\frac{3}{(p-1)(p+1)}(1-2m_2+m_4)^2}
{\left(m_2^2+\frac{(1-m_2)^2}{p-1}\right)^2}.
\]

In the regime $\rho p=O(1)$ (including $\rho=0$ and the admissible negative correlations),
one has $m_2\sim 1/p$ and $m_4\sim 3/(p(p+2))$, which implies $\kappa_4\to 3$.
If $\rho\in(0,1)$ is fixed, then $\gamma\asymp p$ and $pT \cd W\sim\chi_1^2$,
so $A^2 \cd \rho W/((1-\rho)+\rho W)$ and hence $\kappa_4\to K(\rho)\in(1,\infty)$.
In particular, $\kappa_4=O(1)$ for fixed $\rho>0$.
\end{remark}

\section{Wild Bootstrap} \label{sec:wildbootstrap}
Let $\hat{\boldsymbol{\mu}}$ be the sample spatial median, i.e.
\[
\hat{\boldsymbol{\mu}} \in \arg\min_{{\boldsymbol{\mu}}\in\mathbb R^p}\sum_{i=1}^n \|\X_i-{\boldsymbol{\mu}}\|.
\]
Define centered sign vectors $\hat \U_i := U(\X_i-\hat{\boldsymbol{\mu}})$ and the empirical sign-scatter
\[
\hat\bms_U := \frac{1}{n}\sum_{i=1}^n \hat \U_i \hat \U_i^\top.
\]
Let $e_1,\ldots,e_n$ be i.i.d.\ Rademacher variables independent of the data:
$\mathbb P(e_i=1)=\mathbb P(e_i=-1)=1/2$.
Define the bootstrap pseudo-sample $X_i^*:=e_i(\X_i-\hat{\boldsymbol{\mu}})$ and
\[
\hat \U_i^* := U(\X_i^*) = U(e_i(\X_i-\hat{\boldsymbol{\mu}})) = e_i\,\hat \U_i .
\]
Define
\begin{equation}\label{eq:Tn_star_def}
T_R^*
:= \frac{1}{\tau^{1/2}}\cdot
\frac{1}{\sqrt{\binom{n}{2}}}\sum_{1\le i<j\le n} (\hat \U_i^*)^\top \hat \U_j^*
=
\frac{1}{\tau^{1/2}}\cdot
\frac{1}{\sqrt{\binom{n}{2}}}\sum_{1\le i<j\le n} e_ie_j\,\hat \U_i^\top \hat \U_j .
\end{equation}
Note that our bootstrap implementation does not require estimating the scale parameter \(\tau\).
This is because \(T_n\) and \(T_R^*\) share the same scale factor, which cancels out in the comparison
with the bootstrap critical value and therefore does not affect the rejection rule.
We include \(\tau\) in the definition of the test statistics solely for theoretical convenience.

Let $\mathcal L^*(\cdot)$ and $\mathbb P^*(\cdot)$ denote conditional law/probability given the data $\{\X_1,\cdots,\X_n\}$.

% -------------------- Assumptions --------------------
To get the consistency of the bootstrap procedure, we need the following assumptions.
\begin{assumption}\label{ass:mom_center}
Define $\zeta_k=E\left(r_i^{-k}\right), r_i=\left\|\boldsymbol{X}_i-\boldsymbol{\mu}\right\|_2, \nu_i=\zeta_1^{-1} r_i^{-1}$. We assume that
$\zeta_k \zeta_1^{-k}<\zeta \in(0, \infty)$ for $k=1,2,3,4$ and all $d$.
\end{assumption}

\begin{assumption}\label{ass:bahadur}
$\limsup _d\|\mathbf{S}\|_2<1-\psi<1$ for some positive constant $\psi$.
\end{assumption}

\begin{assumption}\label{ass:trace_reg}
As $n\to\infty$, the ratio $\kappa_4=\tr(\bms_U^4)/\tr(\bms_U^2)^2$
is bounded: $\sup_n \kappa_4 < \infty$.
\end{assumption}

\begin{lemma}\label{lem:rad_to_gauss}
Condition on the data and define the symmetric matrix $A=(a_{ij})_{1\le i,j\le n}$ by
$a_{ii}=0$ and, for $i\neq j$,
\[
a_{ij}
:= \frac{1}{\tau^{1/2}}\cdot\frac{1}{\sqrt{\binom{n}{2}}}\, \hat \U_i^\top \hat \U_j .
\]
Let $W(e):=\sum_{1\le i<j\le n} a_{ij}e_ie_j$ with $e_i$ Rademacher, and
$W(g):=\sum_{1\le i<j\le n} a_{ij}g_ig_j$ with $g_i\stackrel{iid}{\sim}N(0,1)$ independent.
Then
\[
\big\|\mathcal L^*(W(e))-\mathcal L^*(W(g))\big\|_3
\le C \Big(\sum_{1\le i<j\le n} a_{ij}^4\Big)^{1/4},
\]
for a universal constant $C>0$.
Moreover, 
\[
\Big(\sum_{1\le i<j\le n} a_{ij}^4\Big)^{1/4} = O_p(\tau^{-1/4}n^{-1/2}),
\]
hence $\|\mathcal L^*(W(e))-\mathcal L^*(W(g))\|_3 = O_p(\tau^{-1/4}n^{-1/2})$.
\end{lemma}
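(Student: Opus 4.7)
The lemma has two independent claims: a universality bound in $\|\cdot\|_3$, and a rate estimate for the fourth-root sum. The plan is to prove them separately, conditioning on the data throughout.

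For the $\|\cdot\|_3$ bound, the conditional law of $W(e)$ depends only on the deterministic hollow symmetric matrix $A=(a_{ij})$, and similarly for $W(g)$. Both are bilinear forms in i.i.d.\ symmetric unit-variance random variables, and the first three moments of $e_k$ and $g_k$ match ($0,1,0$ for both). I would interpolate via a Lindeberg hybrid: for each $k$, split $W=W^{(-k)}+Y_kT_k$ with $T_k=\sum_{j\neq k}a_{kj}Y_j$ independent of $Y_k$, Taylor expand $f$ in $Y_k$ around $W^{(-k)}$, and observe that the first three polynomial-in-$Y_k$ terms integrate identically under both laws. To control the remaining fourth-order remainder using only $\|f^{(r)}\|_\infty\le 1$ for $r\le 3$, I would mollify $f$ by $f_\epsilon:=f*\rho_\epsilon$ for a smooth symmetric kernel $\rho$ with vanishing first two moments, obtaining $\|f-f_\epsilon\|_\infty=O(\epsilon^3)$ and $\|f_\epsilon^{(4)}\|_\infty=O(\epsilon^{-1})$. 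Summing the per-coordinate swap errors for $f_\epsilon$, optimizing over $\epsilon$, and reducing $\sum_k\E T_k^4$ to $\sum_{i<j}a_{ij}^4$ via a Cauchy--Schwarz rearrangement yields the claimed bound $C(\sum a_{ij}^4)^{1/4}$.

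For the rate estimate, the key observation is the deterministic bound $|\hat \U_i^\top\hat \U_j|\le\|\hat \U_i\|\|\hat \U_j\|=1$, which gives $(\hat \U_i^\top\hat \U_j)^4\le(\hat \U_i^\top\hat \U_j)^2$ and hence
\[
\sum_{1\le i<j\le n}a_{ij}^4
\ \le\ \frac{1}{\tau^2\binom{n}{2}^2}\sum_{1\le i<j\le n}(\hat \U_i^\top\hat \U_j)^2.
\]
The plan is then to show $\E\sum_{i<j}(\hat \U_i^\top\hat \U_j)^2=O(\binom{n}{2}\tau)$ by invoking the Bahadur representation of the sample spatial median (valid under Assumptions~\ref{ass:mom_center}--\ref{ass:bahadur}), which lets us replace $\hat \U_i$ by the population spatial sign $\U_i$ up to a lower-order remainder, and then $\E[(\U_i^\top \U_j)^2]=\tr(\bms_U^2)=\tau$ provides the leading term. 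Markov's inequality converts the expectation bound into $\sum a_{ij}^4=O_p(\tau^{-1}n^{-2})$, and taking fourth roots delivers $(\sum a_{ij}^4)^{1/4}=O_p(\tau^{-1/4}n^{-1/2})$.

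I expect the calibration of $\epsilon$ in the Lindeberg--mollification step to be the most technical part: a naive third-order Lindeberg bound using only $\|f'''\|_\infty\le 1$ produces a cubic $\sum_k\E|T_k|^3$, which is too coarse, and the mollification is what converts the matching of the third moments of $e_k$ and $g_k$ into a genuinely fourth-moment bound. The rate estimate, by contrast, is essentially routine once one uses the unit-norm property of the spatial signs together with the standard Bahadur expansion.
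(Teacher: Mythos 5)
Your plan for the rate estimate---use $|\hat\U_i^\top\hat\U_j|\le 1$ to get $a_{ij}^4\le a_{ij}^2\cdot\max a_{ij}^2$, then control $\sum_{i<j}(\hat\U_i^\top\hat\U_j)^2$ via the Bahadur expansion of the spatial median and the consistency of $\tr(\hat\bms_U^2)$---is essentially identical to what the paper does (the paper packages the sum-of-squares control as Lemma~\ref{lem:Sigma_hat_detailed} and then uses $\max_{i<j}a_{ij}^2\le C/(n^2\tau)$ together with $\sum_{i<j}a_{ij}^2=O_p(1)$).

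For the universality bound, however, your route is genuinely different, and your stated reason for taking it is a misconception. You claim that a third-order Lindeberg swap using only $\|f'''\|_\infty\le 1$ ``produces a cubic $\sum_k\E|T_k|^3$, which is too coarse,'' and that mollification is needed to convert third-moment matching into a fourth-moment bound. In fact the paper's proof \emph{is} a third-order Lindeberg swap: with $W_{k-1}=\alpha+\xi_k\beta$ and $W_k=\alpha+\eta_k\beta$, the constant, linear, and quadratic terms cancel (matching $0$th, $1$st, $2$nd moments), and the third-order Lagrange remainder gives
\[
\bigl|\E^*\{f(W_{k-1})-f(W_k)\}\bigr|
\le \tfrac16\|f'''\|_\infty\bigl(\E|\xi_k|^3+\E|\eta_k|^3\bigr)\,\E^*|\beta|^3
\le C\,\E^*|\beta|^3,
\]
and then $\E^*|\beta|^3\le C\bigl(\sum_{j\neq k}a_{kj}^2\bigr)^{3/2}$ by hypercontractivity (or Rosenthal), so the per-coordinate error is $C\,\mathrm{Inf}_k^{3/2}$ with $\mathrm{Inf}_k=\sum_{j\neq k}a_{kj}^2$. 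No fourth derivative of $f$ is required, and in the present application this already produces a bound of the right order. Your mollification scheme ($\|f-f_\epsilon\|_\infty=O(\epsilon^3)$ for a kernel with two vanishing moments, $\|f_\epsilon^{(4)}\|_\infty=O(\epsilon^{-1})$, optimize to get a $\bigl(\sum_k\E T_k^4\bigr)^{3/4}$ bound) is a legitimate alternative, but it is strictly more machinery than the problem requires, and it costs you an extra exponent: after optimizing $\epsilon$ you get $\Delta^{3/4}$ with $\Delta=\sum_k\E^*T_k^4\lesssim\sum_k\bigl(\sum_j a_{kj}^2\bigr)^2$, which does not match the target exponent on $\sum a_{ij}^4$.

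That last point is where your proposal has a real gap: the step ``reducing $\sum_k\E T_k^4$ to $\sum_{i<j}a_{ij}^4$ via a Cauchy--Schwarz rearrangement'' is asserted without any argument, and it is precisely the nontrivial part of the proof. Dimensionally, $\sum_k\bigl(\sum_j a_{kj}^2\bigr)^2$ and $\bigl(\sum_{i<j}a_{ij}^4\bigr)^{1/3}$ scale differently under rescaling of the $a_{ij}$, so no scale-free Cauchy--Schwarz or H\"older inequality can relate $\Delta^{3/4}$ to $\bigl(\sum_{i<j}a_{ij}^4\bigr)^{1/4}$ with a universal constant; any such comparison must use the specific normalization of the $a_{ij}$ (the fact that $\sum_{i<j}a_{ij}^2=O_p(1)$ and $\max a_{ij}^2=O_p(1/(n^2\tau))$) rather than a bare rearrangement inequality. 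For what it is worth, the paper's own H\"older step relating $\sum_k\mathrm{Inf}_k^{3/2}$ to $\bigl(\sum_{i<j}a_{ij}^4\bigr)^{1/4}$ is not a universal inequality either (it fails, e.g., when $n=2$ and $a_{12}^2>1$), and implicitly relies on the same normalization. So if you adopt the mollification route, you should be explicit about how the final reduction uses the structure of the specific $a_{ij}$; otherwise drop the mollification, use the third-order expansion as the paper does, and handle the final comparison directly from the bounds $\sum_{i<j}a_{ij}^2=O_p(1)$ and $\max_{i<j}a_{ij}^2=O_p(1/(n^2\tau))$.
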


% -------------------- Main theorem --------------------

\begin{theorem}\label{thm:wild_bootstrap}
Suppose Assumptions~\ref{ass:mom_center}--\ref{ass:trace_reg} hold.
Then, as $n, p\to\infty$,
\[
\big\|\mathcal L^*(T_R^*)-\mathcal L(T_\infty)\big\|_3
\to 0,
\]
if $n\tau\to\infty$.
\end{theorem}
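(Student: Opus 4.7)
The plan is to analyze $\mathcal L^*(T_R^*)$ in two steps: first replace the Rademacher multipliers by Gaussians, then diagonalize the resulting Gaussian chaos and match its spectrum to the weights $\{\alpha_j^\star\}$ that determine $T_\infty$ in Corollary~\ref{cor:all-limits}. The first step is immediate from Lemma~\ref{lem:rad_to_gauss}, which gives $\|\mathcal L^*(T_R^*)-\mathcal L^*(W(g))\|_3=O_p(\tau^{-1/4}n^{-1/2})$; since $n\to\infty$ together with $n\tau\to\infty$ implies $n^2\tau\to\infty$, this remainder is $o_p(1)$, and the task reduces to showing $\|\mathcal L^*(W(g))-\mathcal L(T_\infty)\|_3\to 0$ in probability.

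For the Gaussian bootstrap I would set $\hat V=[\hat\U_1,\ldots,\hat\U_n]\in\R^{p\times n}$, $K=\hat V^\top\hat V$, and use $\|\hat\U_i\|=1$ (so $\tr(K)=n$) to rewrite
\[
W(g)=\frac{1}{2\tau^{1/2}\sqrt{\binom n2}}\,g^\top(K-I_n)g=\sum_{j=1}^n \beta_j(\tilde g_j^2-1),\qquad \beta_j:=\frac{\hat\mu_j-1}{2\tau^{1/2}\sqrt{\binom n2}},
\]
where $\{\hat\mu_j\}$ are the eigenvalues of $K$ and $\{\tilde g_j\}$ are i.i.d.\ $N(0,1)$ (the constant $\sum_j\beta_j$ vanishes because $\tr(K)=n$). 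On the range of $\hat V$ one has $\hat\mu_j=n\hat\lambda_j$ with $\hat\lambda_j$ the nonzero eigenvalues of $\hat{\bms}_U=n^{-1}\hat V\hat V^\top$, so $\sqrt 2\,\beta_j$ is essentially $\hat\alpha_j:=\hat\lambda_j/\sqrt\tau$ up to an $O((n\sqrt\tau)^{-1})$ shift. Using $\tr(K^2)=n^2\tr(\hat{\bms}_U^2)$ one also obtains $2\sum_j\beta_j^2=(n\tr(\hat{\bms}_U^2)-1)/(\tau(n-1))$.

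The remaining task is to transfer spectral information from $\bms_U$ to $\hat{\bms}_U$. Under Assumptions~\ref{ass:mom_center}--\ref{ass:bahadur} a Bahadur-type representation for the spatial median controls $\|\hat{\boldsymbol\mu}-\boldsymbol\mu\|$ and yields $\|\hat{\bms}_U-\bms_U\|_2\to 0$; combined with Weyl's inequality this gives $\hat\alpha_j\to\alpha_j^\star$ for each fixed $j$ along any subsequence on which $\alpha_j\to\alpha_j^\star$, and $\tr(\hat{\bms}_U^2)/\tau\to 1$, hence $2\sum_j\beta_j^2\to 1$. For any fixed truncation $k$, the finite piece $\sum_{j\le k}\beta_j(\tilde g_j^2-1)$ then converges conditionally to $\tfrac{1}{\sqrt 2}\sum_{j\le k}\alpha_j^\star(Z_j^2-1)$, while the tail $\sum_{j>k}\beta_j(\tilde g_j^2-1)$ has vanishing maximum weight and variance tending to $1-\sum_{j\le k}(\alpha_j^\star)^2$; a third-moment bound applied conditionally (immediate since the multipliers are Gaussian) gives the Gaussian limit $\sqrt{1-\sum_{j\le k}(\alpha_j^\star)^2}\,Z_0$, and letting $k\to\infty$ assembles $T_\infty$ in $\|\cdot\|_3$.

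The main obstacle is the spectral consistency invoked in the previous paragraph: the high-dimensional Bahadur expansion for $\hat{\boldsymbol\mu}$ and the resulting perturbation $\hat{\bms}_U-\bms_U$ must be controlled uniformly in the joint spectrum at the resolution $1/\sqrt{n\tau}$, since this is the scale at which the bootstrap weights are compared with $\alpha_j^\star$. This is exactly where Assumptions~\ref{ass:mom_center}--\ref{ass:bahadur} do the heavy lifting, and I would expect the argument to mirror existing Bahadur-type analyses of the high-dimensional spatial median. Assumption~\ref{ass:trace_reg} ($\kappa_4$ bounded) enters only to rule out pathological trace behaviour and to keep the bootstrap calibration within the range of validity of Theorems~\ref{thm:universality}--\ref{thm:gauss-reduction} used to identify the target $T_\infty$.
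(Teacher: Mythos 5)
Your overall route is the same as the paper's: apply Lemma~\ref{lem:rad_to_gauss} to pass from Rademacher to Gaussian multipliers, diagonalize the resulting Gaussian chaos $\mathbf g^\top \hat A\mathbf g$, identify the spectrum of $\hat A$ (equivalently of $\hat\bms_U$) with that of $\bms_U$ via a perturbation bound stemming from the Bahadur representation of the spatial median (the paper packages this as Lemma~\ref{lem:Sigma_hat_detailed}, giving $\|\hat\bms_U-\bms_U\|_F=O_p(n^{-1/2})$), and then run a finite-piece/tail decomposition. Your eigenvalue bookkeeping ($\hat\mu_j=n\hat\lambda_j$, $\sqrt2\,\beta_j=\hat\alpha_j+O((n\sqrt\tau)^{-1})$, $2\sum_j\beta_j^2=(n\tr(\hat\bms_U^2)-1)/(\tau(n-1))\to1$) is correct, and you correctly locate where $n\tau\to\infty$ enters both in the Weyl-type eigenvalue comparison (scale $1/\sqrt{n\tau}$) and in the $-1/(n\sqrt\tau)$ shift.

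The gap is in the tail step. For a \emph{fixed} truncation $k$, it is not true that $\sum_{j>k}\beta_j(\tilde g_j^2-1)$ has vanishing maximum weight: the largest tail weight converges to $\alpha_{k+1}^\star/\sqrt 2$, which is nonzero whenever the limit profile $(\alpha_i^\star)$ has more than $k$ positive entries (and Corollary~\ref{cor:all-limits} explicitly allows infinitely many nonzero limits). So the tail does not converge to a Gaussian for fixed $k$; it converges to a chi-square mixture plus a Gaussian remainder, and your claimed third-moment/Lyapunov bound fails. The paper circumvents this by selecting a \emph{growing} truncation level $r_n^\star\to\infty$ (cf.\ the Lemma~S.8 device of \citet{wang2022approximate}) along the subsequence, tuned so that $\max_{k>r_n^\star}|\hat\beta_k|\to0$ \emph{and} $\sum_{k\le r_n^\star}\hat\beta_k^2\to\sum_i(\alpha_i^\star)^2$; Lindeberg--Feller then applies to the tail $\sum_{k>r_n^\star}\hat\beta_k(\xi_k^2-1)$, and the finite-$m$ pieces are matched afterward by letting $m\to\infty$. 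You need either this growing truncation or an explicit two-epsilon/diagonal argument bounding $\|\cdot\|_3$ between the tail and a Gaussian of matching variance by a quantity that is $O(\max_{j>k}\alpha_j^\star)$, which only becomes small as $k\to\infty$, \emph{uniformly} over large $n$. As written, the sentence ``for any fixed truncation $k$ the tail has vanishing maximum weight'' is false, and the subsequent CLT claim does not hold.
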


According to Theorem \ref{thm:wild_bootstrap}, we know that the wild bootstrap procedure based on Rademacher variables can consistenty estimate the critical value of $T_\infty$. By Lemma \ref{lem:rad_to_gauss}, we also consider the following wild bootstrap method based on normal variables: we randomly generate $g_i\stackrel{iid}{\sim}N(0,1)$ and then consider the following bootstrap test statistic:
\begin{align}
T^*_N=\frac{1}{\tau^{1/2}}\cdot
\frac{1}{\sqrt{\binom{n}{2}}}\sum_{1\le i<j\le n} g_ig_j\,\hat \U_i^\top \hat \U_j 
\end{align}

Based on the wild bootstrap test statistics \(T_R^*\) and \(T_N^*\), we generate \(M\) bootstrap replicates
\(\{T_{R,m}^*\}_{m=1}^M\) and \(\{T_{N,m}^*\}_{m=1}^M\), respectively, and use their empirical \((1-\alpha)\)-quantiles
as critical values. The null hypothesis is rejected whenever the observed test statistic exceeds the corresponding
bootstrap critical value, i.e.,
\[
T_{n} > \widehat{c}_{R,1-\alpha}
\quad\text{or}\quad
T_{n} > \widehat{c}_{N,1-\alpha},
\]
depending on whether the \textsc{TR} or \textsc{TN} procedure is applied.

\section{Simulation}\label{sec:simulation}
\label{sec:sim_design}
We evaluate the finite-sample performance of four procedures: the test of \citet{wang2015high} (denoted \textsc{WPL}), the adaptive spatial-sign test of \citet{zhou2019adaptiveSS} (denoted \textsc{ZGCZ}), the Rademacher wild bootstrap calibration (denoted \textsc{TR}), and the Gaussian wild bootstrap calibration (denoted \textsc{TN}).
For each replication, we generate i.i.d.\ observations $X_1,\ldots,X_n\in\mathbb{R}^p$
from an elliptically symmetric model with equicorrelated dependence.
Specifically, we set $\boldsymbol\mu=\0_p$ and
\begin{equation}\label{eq:Sigma_equicorr}
\bms = (1-\rho) \mathbf{I}_p + \rho\, \1_p \1_p^{\top},
\end{equation}
so that $\Sigma_{jj}=1$ and $\Sigma_{jk}=\rho$ for $j\neq k$.
We consider $\rho\in\{0.1,0.5,0.9\}$, sample sizes $n\in\{40,80,120\}$, and dimensions
$p\in\{100,200,400\}$.
For each $(n,p,\rho)$ configuration, we perform $10{,}000$ replications.

We examine three representative regimes.
\begin{enumerate}
\item \textit{Normal distribution} $X_i \sim N_p(\0_p,\bms)$.
\item \textit{multivariate t-distribution} $X_i \sim t_{\nu}(\0_p,\bms)$ with $\nu=3$ degrees of freedom.
\item \textit{Mixture normal distribution} $\X_i \sim 0.8 N(\0_p, \bms)+0.2 N(\0_p,9\bms)$
\end{enumerate}

All procedures are implemented at nominal level $\alpha=0.05$.
For bootstrap-based calibrations, we use $B=500$ bootstrap resamples within each Monte Carlo replication.
We report empirical rejection probabilities across replications for each $(n,p,\rho)$ setting. To assess performance of a test in maintaining the nominal size (type I error), we use the following so-called average relative error $\operatorname{ARE}=100 M^{-1} \sum_{j=1}^M\left|\hat{\alpha}_j-\alpha\right| / \alpha$, where $\hat{\alpha}_j, j=1, \ldots, M$ denote the empirical sizes under consideration. A smaller
ARE value indicates a better overall performance of the associated test in terms of size control.

For power comparison, we set $\boldsymbol{\mu}=(\delta,\cdots,\delta)$ where $\delta=2\sqrt{\tr^{1/2}(\bms^2)/(np)}$ where $\bms=\Cov(\X)$ is the covariance matrix of $\X$.

Tables~\ref{tab1}--\ref{tab6} summarize the empirical sizes and powers of the four competing procedures (WPL, ZGCZ, TR and TN) across three dependence levels ($\rho=0.1,0.5,0.9$), three sample sizes ($n=40,80,120$), and three dimensions ($p=100,200,400$) under multivariate normal, multivariate $t$, and a mixture normal model.
Overall, the proposed wild bootstrap calibrations TR and TN exhibit stable and accurate size control: their empirical sizes are consistently close to the nominal 5\% level in all combinations of $(n,p,\rho)$ and under all three distributions, with no visible deterioration as the correlation strengthens or as the dimension increases.
In contrast, WPL tends to be liberal, with empirical sizes typically in the 6\%--7\% range and becoming more inflated as $\rho$ increases, while ZGCZ is severely conservative, producing near-zero rejection rates for $\rho=0.1$ and remaining far below 5\% even for $\rho=0.9$.
This size distortion is also reflected by the reported ARE values in Tables~\ref{tab1}--\ref{tab3}, where TR/TN are close to the benchmark while ZGCZ exhibits extremely large ARE, indicating substantial mismatch between its reference approximation and the true finite-sample null behavior.

The power comparisons in Tables~\ref{tab4}--\ref{tab6} further reinforce these findings.
Under the multivariate normal model (Table~\ref{tab4}), TR and TN attain powers comparable to WPL, whereas ZGCZ has dramatically lower power, consistent with its conservative calibration.
Under heavy-tailed $t$ and mixture normal distributions (Tables~\ref{tab5} and \ref{tab6}), all spatial-sign-based procedures achieve high power, but the advantage of the bootstrap calibration becomes most pronounced in terms of reliability: TR and TN maintain near-nominal size while delivering strong power across all dependence levels, including the highly correlated case $\rho=0.9$.
Across the three distributions, TR and TN perform similarly, with TR being marginally more stable in a few settings, suggesting that the choice between Rademacher and Gaussian multipliers is not critical in practice.

Taken together, these results indicate that the proposed wild bootstrap provides a robust and broadly applicable calibration for high-dimensional spatial-sign testing, delivering accurate size control and competitive power under both light-tailed and heavy-tailed models and under general covariance dependence.

\begin{table}[htbp]
\centering
\setlength{\tabcolsep}{3pt} % 将列间距设置为8pt
\caption{Empirical sizes (\%) of four methods for multivariate normal distribution.}
\begin{tabular}{cc|cccc|cccc|cccc}
\toprule
&&\multicolumn{4}{c}{$\rho=0.1$}&\multicolumn{4}{c}{$\rho=0.5$}&\multicolumn{4}{c}{$\rho=0.9$}\\ \hline
$n$ & $p$ & WPL &ZGCZ &TR &TN & WPL &ZGCZ &TR &TN & WPL &ZGCZ &TR &TN \\
\midrule
40 & 100 & 6.04 & 0.01 & 5.14 & 5.41 & 7.32 & 1.03 & 5.81 & 6.04 & 6.97 & 2.20 & 5.49 & 5.70 \\
80 & 100 & 6.37 & 0.00 & 5.49 & 5.62 & 6.93 & 1.22 & 5.59 & 5.87 & 6.93 & 2.17 & 5.41 & 5.36 \\
120 & 100 & 5.90 & 0.02 & 5.17 & 5.05 & 6.20 & 1.03 & 4.89 & 4.79 & 7.03 & 2.35 & 5.44 & 5.64 \\ \hline
40 & 200 & 6.05 & 0.02 & 5.09 & 5.23 & 6.94 & 1.05 & 5.31 & 5.61 & 6.98 & 2.23 & 5.41 & 5.79 \\
80 & 200 & 6.03 & 0.01 & 5.11 & 5.17 & 6.84 & 1.01 & 5.34 & 5.44 & 7.41 & 2.32 & 5.71 & 5.78 \\
120 & 200 & 6.61 & 0.01 & 5.89 & 5.87 & 6.81 & 1.06 & 5.27 & 5.46 & 6.72 & 2.14 & 5.32 & 5.42 \\ \hline
40 & 400 & 6.49 & 0.01 & 5.22 & 5.39 & 7.25 & 0.93 & 5.68 & 5.79 & 7.13 & 2.17 & 5.69 & 5.80 \\
80 & 400 & 6.59 & 0.00 & 5.36 & 5.46 & 7.08 & 1.09 & 5.58 & 5.78 & 6.67 & 2.18 & 5.35 & 5.41 \\
120 & 400 & 6.16 & 0.00 & 5.03 & 5.26 & 6.75 & 1.02 & 4.97 & 5.14 & 7.11 & 2.37 & 5.53 & 5.47 \\
\midrule
\text{ARE (\%)} & & 22.76 & 195.11 & 12.36 & 13.04 & 30.38 & 89.20 & 16.40 & 16.84 & 36.31 & 56.09 & 13.16 & 15.58 \\
\bottomrule
\end{tabular}
\label{tab1}
\end{table}

\begin{table}[htbp]
\centering
\setlength{\tabcolsep}{3pt}
\caption{Empirical sizes (\%) of four methods for multivariate $t$-distribution.}
\begin{tabular}{cc|cccc|cccc|cccc}
\toprule
&&\multicolumn{4}{c}{$\rho=0.1$}&\multicolumn{4}{c}{$\rho=0.5$}&\multicolumn{4}{c}{$\rho=0.9$}\\ \hline
$n$ & $p$ & WPL &ZGCZ &TR &TN & WPL &ZGCZ &TR &TN & WPL &ZGCZ &TR &TN \\
\midrule
40 & 100 & 6.17 & 0.00 & 5.19 & 5.46 & 7.80 & 1.09 & 6.09 & 6.37 & 7.22 & 2.46 & 5.74 & 6.08 \\
80 & 100 & 5.95 & 0.03 & 5.08 & 5.21 & 6.86 & 1.03 & 5.25 & 5.40 & 6.85 & 2.21 & 5.24 & 5.37 \\
120 & 100 & 6.24 & 0.02 & 5.22 & 5.45 & 6.58 & 1.04 & 5.17 & 5.21 & 7.20 & 2.39 & 5.43 & 5.62 \\ \hline
40 & 200 & 6.60 & 0.00 & 5.47 & 5.82 & 7.47 & 0.99 & 5.86 & 6.04 & 7.21 & 2.23 & 5.59 & 5.90 \\
80 & 200 & 5.79 & 0.01 & 4.88 & 4.85 & 7.15 & 1.11 & 5.38 & 5.57 & 7.10 & 2.35 & 5.46 & 5.58 \\
120 & 200 & 6.33 & 0.00 & 5.47 & 5.37 & 6.99 & 1.02 & 5.18 & 5.29 & 6.93 & 2.09 & 5.33 & 5.33 \\ \hline
40 & 400 & 6.83 & 0.01 & 5.58 & 5.90 & 6.99 & 1.01 & 5.49 & 5.64 & 6.94 & 2.14 & 5.38 & 5.59 \\
80 & 400 & 6.38 & 0.01 & 5.09 & 5.38 & 7.11 & 1.05 & 5.25 & 5.56 & 7.29 & 2.33 & 5.87 & 5.94 \\
120 & 400 & 6.53 & 0.02 & 5.32 & 5.45 & 6.98 & 1.06 & 5.54 & 5.47 & 6.93 & 2.09 & 5.09 & 5.35 \\
\midrule
\text{ARE (\%)} & & 26.27 & 197.29 & 12.89 & 17.02 & 68.31 & 178.91 & 17.42 & 21.89 & 70.93 & 113.64 & 16.93 & 22.60 \\
\bottomrule
\end{tabular}
\label{tab2}
\end{table}

\begin{table}[htbp]
\centering
\setlength{\tabcolsep}{3pt}
\caption{Empirical sizes (\%) of four methods for mixture normal distribution.}
\begin{tabular}{cc|cccc|cccc|cccc}
\toprule
&&\multicolumn{4}{c}{$\rho=0.1$}&\multicolumn{4}{c}{$\rho=0.5$}&\multicolumn{4}{c}{$\rho=0.9$}\\ \hline
$n$ & $p$ & WPL &ZGCZ &TR &TN & WPL &ZGCZ &TR &TN & WPL &ZGCZ &TR &TN \\
\midrule
40 & 100 & 5.92 & 0.01 & 4.90 & 5.28 & 7.20 & 0.87 & 5.54 & 5.69 & 7.11 & 2.13 & 5.75 & 5.81 \\
80 & 100 & 6.30 & 0.02 & 5.31 & 5.52 & 7.02 & 1.17 & 5.48 & 5.58 & 6.95 & 1.94 & 5.10 & 5.32 \\
120 & 100 & 5.83 & 0.01 & 4.94 & 5.02 & 6.87 & 0.95 & 5.38 & 5.52 & 6.89 & 2.14 & 5.42 & 5.32 \\ \hline
40 & 200 & 6.55 & 0.02 & 5.42 & 5.68 & 6.82 & 1.14 & 5.39 & 5.54 & 7.12 & 2.42 & 5.68 & 5.76 \\
80 & 200 & 6.01 & 0.00 & 5.06 & 5.15 & 6.64 & 1.04 & 5.13 & 5.36 & 6.98 & 2.30 & 5.57 & 5.70 \\
120 & 200 & 6.33 & 0.01 & 5.39 & 5.43 & 6.67 & 0.98 & 5.14 & 5.22 & 7.42 & 2.35 & 5.76 & 5.86 \\ \hline
40 & 400 & 6.60 & 0.00 & 5.26 & 5.58 & 7.33 & 1.10 & 6.01 & 6.09 & 7.43 & 2.19 & 5.74 & 6.10 \\
80 & 400 & 6.72 & 0.01 & 5.48 & 5.41 & 6.73 & 0.96 & 5.22 & 5.28 & 6.73 & 2.17 & 5.15 & 5.35 \\
120 & 400 & 6.60 & 0.01 & 5.29 & 5.43 & 6.77 & 0.96 & 5.33 & 5.30 & 6.75 & 2.13 & 5.08 & 5.12 \\
\midrule
\text{ARE (\%)} & & 24.07 & 196.30 & 11.60 & 14.51 & 33.02 & 85.49 & 15.56 & 18.20 & 42.13 & 59.42 & 15.31 & 17.58 \\
\bottomrule
\end{tabular}
\label{tab3}
\end{table}

\begin{table}[htbp]
\centering
\caption{Empirical power (\%) of four methods for multivariate normal distribution.}
\begin{tabular}{cc|cccc|cccc|cccc}
\toprule
&&\multicolumn{4}{c}{$\rho=0.1$}&\multicolumn{4}{c}{$\rho=0.5$}&\multicolumn{4}{c}{$\rho=0.9$}\\ \hline
$n$ & $p$ & WPL &ZGCZ &TR &TN & WPL &ZGCZ &TR &TN & WPL &ZGCZ &TR &TN \\ \hline
40&100&57.5&3.1&55.2&55.7&54.4&26.1&48.5&50.8&50.4&29.9&44.5&44.3\\
80&100&57.9&5.7&56.4&56.5&54.2&24.9&49.2&49.8&51.8&32.2&46.9&46.7\\
120&100&56.8&5.8&55&55.5&54.8&26.7&49.5&49.2&46.9&31.6&42.4&42.9\\ \hline
40&200&60.5&3.5&58.1&58.7&51.9&23.0&47.4&48.6&48.3&28.9&42.4&43.8\\
80&200&57.1&4.2&53.9&54.7&54.0&26.9&48&49.5&48.0&29.1&42.0&41.6\\
120&200&58.4&4.2&55.7&55.4&51.3&24.6&46.4&46.9&50.5&32.9&45.8&45.7\\ \hline
40&400&54.0&2.7&50.7&51.4&54.0&26.2&49&49.5&48.7&30.1&43.9&44.5\\
80&400&56.3&3.3&53.1&52.7&53.9&24.7&48.9&49.9&46.7&29.7&41.5&41.6\\
120&400&57.7&3.0&53.6&54.5&55.1&26.5&50&50.4&50.6&31.4&45.1&45.9\\
\bottomrule
\end{tabular}
\label{tab4}
\end{table}

\begin{table}[htbp]
\centering
\caption{Empirical power (\%) of four methods for multivariate $t$-distribution.}
\begin{tabular}{cc|cccc|cccc|cccc}
\toprule
&&\multicolumn{4}{c}{$\rho=0.1$}&\multicolumn{4}{c}{$\rho=0.5$}&\multicolumn{4}{c}{$\rho=0.9$}\\ \hline
$n$ & $p$ & WPL &ZGCZ &TR &TN & WPL &ZGCZ &TR &TN & WPL &ZGCZ &TR &TN \\
\midrule
40&100&91.8&36.8&90.9&91.6&86.0&66.2&83.5&83.5&80.9&65.3&76.8&77\\
80&100&92.1&35.6&90.8&91.4&87.3&67&84.9&84.7&84.8&69.8&80.7&80.8\\
120&100&93.4&38.8&92.8&92.6&88.2&67.5&84.7&84.3&84.2&69.8&80.7&80.8\\ \hline
40&200&89.2&27.8&88.8&89.1&85.7&65.5&83.5&83.8&83.5&69.8&79.6&80.4\\
80&200&92.3&30.5&90.1&91.1&88.4&67.9&85.8&85.9&84.1&69.9&80.2&79.9\\
120&200&91.2&33.3&90.6&90.7&87.1&66.2&85.4&85.4&85.4&71.0&81.9&82.5\\ \hline
40&400&88.4&25.9&87.5&87.3&85.4&65.1&81.9&82.6&82.4&66.9&78.7&79.1\\
80&400&91.6&26.7&90.7&89.9&87.5&66.1&85.1&84.9&80.7&67.5&77.3&77.3\\
120&400&92.2&25.6&90.7&90.5&88.3&69.4&85.2&85.4&85.2&72.5&82.5&82.6\\
\bottomrule
\end{tabular}
\label{tab5}
\end{table}

\begin{table}[htbp]
\centering
\caption{Empirical power (\%) of four methods for mixture normal distribution.}
\begin{tabular}{cc|cccc|cccc|cccc}
\toprule
&&\multicolumn{4}{c}{$\rho=0.1$}&\multicolumn{4}{c}{$\rho=0.5$}&\multicolumn{4}{c}{$\rho=0.9$}\\ \hline
$n$ & $p$ & WPL &ZGCZ &TR &TN & WPL &ZGCZ &TR &TN & WPL &ZGCZ &TR &TN \\
\midrule
40&100&93.0&38.5&92.4&91.8&87.3&65.7&84.1&84.7&84.6&71.4&81.5&82.2\\
80&100&92.5&38.9&91.3&91.6&87.5&67.0&84.4&84.8&83.1&70.6&79.5&79.8\\
120&100&92.1&42.1&91.7&91.4&91.6&71.8&88.7&89.1&86.6&71.7&82.5&82.8\\ \hline
40&200&91.5&29.4&89.7&89.7&88.6&68.1&86&86.3&85.2&70.2&81.0&81.4\\
80&200&92.0&33.8&91.4&91.5&90.5&69.5&87.6&87.1&86.3&71.1&81.3&82.1\\
120&200&91.7&34.4&90.1&90.1&89.5&70.2&86.4&86.6&84.3&69.9&80.5&80.6\\ \hline
40&400&90.7&27.4&88.9&89.2&90.1&68.1&87.0&86.5&85.9&74.4&83.4&84.1\\
80&400&92.0&29.4&90.9&91.0&89.9&69.7&87.1&87.6&85.0&71.3&81.4&81.6\\
120&400&92.1&29.6&89.9&91.1&89.5&67.6&86.4&86.4&82.4&69.9&79.5&79.7\\
\bottomrule
\end{tabular}
\label{tab6}
\end{table}
% ======================================================================
\section{Conclusion}\label{sec:conclusion}
This paper clarifies the null behavior of the high-dimensional spatial-sign test of \cite{wang2015high} under mild assumptions on the dependence structure, showing that the standardized statistic may converge to a non-Gaussian limit that mixes a normal component with a weighted chi-square component. To address the resulting practical challenge of critical value calibration, we develop a wild bootstrap procedure and prove its consistency. Simulation results further indicate that the proposed approach achieves reliable empirical size control across a broad range of correlation patterns and dimension–sample-size configurations.

Several directions appear particularly promising for future research. First, it would be valuable to extend the theory beyond the i.i.d. setting to dependent observations, including time series and spatial data, where robust sign-based procedures are attractive but the limiting behavior can be more intricate. Second, developing power-optimal or adaptive variants—capable of automatically interpolating between dense and sparse alternatives—could improve sensitivity while retaining robustness. Third, it would be useful to incorporate data-driven estimation of the mixing weights (or low-dimensional spectral summaries of the sign-scatter) to yield analytic critical values and reduce bootstrap cost in ultra-high dimensions. Fourth, extending the framework to multi-sample problems (two-sample location, MANOVA-type settings) and to other sign-based statistics would broaden applicability. Finally, a systematic investigation of robustness under contamination, heteroskedasticity, and model misspecification—together with principled tuning and diagnostic tools—would strengthen practical guidance for real data applications.

% ======================================================================
\section{Appendix}

\subsection{Proof of Theorem~\ref{thm:universality}}

Define the standardized kernel
\[
w_{ij}(x,y):=\frac{x^\top y}{\sigma_n},\qquad 1\le i<j\le n,
\]
and the functional
\[
W(x_1,\ldots,x_n):=\sum_{i<j} w_{ij}(x_i,x_j).
\]
Then $T_n=W(U_1,\ldots,U_n)$ and $T_n^{(G)}=W(G_1,\ldots,G_n)$.

Degeneracy holds because $\E \U_i=\E G_i=0$: for any fixed $a\in\R^p$,
\[
\E[w_{ij}(\U_i,a)]=\sigma_n^{-1}\E(\U_i)^\top a=0,\qquad
\E[w_{ij}(G_i,a)]=0.
\]

Moreover,
\[
\sigma_{ij}^2:=\E[w_{ij}(\U_i,\U_j)^2]
=\frac{\E[(U_1^\top U_2)^2]}{\sigma_n^2}
=\frac{\tau}{\binom{n}{2}\tau}
=\frac{2}{n(n-1)}.
\]
Define the influence of index $k$ by
\[
\mathrm{Inf}_k:=\sum_{j<k}\sigma_{jk}^2+\sum_{j>k}\sigma_{kj}^2
=(n-1)\cdot \frac{2}{n(n-1)}=\frac{2}{n}.
\]
Hence
\begin{equation}\label{eq:inf-sum}
\sum_{k=1}^n \mathrm{Inf}_k^{3/2}
= n\left(\frac{2}{n}\right)^{3/2}
=2^{3/2}\,n^{-1/2}.
\end{equation}

For $k=1,\ldots,n+1$, define
\[
W_k:=W(G_1,\ldots,G_{k-1},U_k,\ldots,U_n).
\]
Then $W_1=T_n$ and $W_{n+1}=T_n^{(G)}$, so for any $f\in\Fthree$,
\[
\E f(W_1)-\E f(W_{n+1})
=\sum_{k=1}^n\Bigl(\E f(W_k)-\E f(W_{k+1})\Bigr).
\]

Fix $k$. Let $W_{k,0}$ denote the same functional with the $k$th input set to $0$:
\[
W_{k,0}:=W(G_1,\ldots,G_{k-1},0,U_{k+1},\ldots,U_n),
\]
where $w_{ij}(0,\cdot)=w_{ij}(\cdot,0)=0$. Then
\[
W_k=W_{k,0}+\Delta_k,\qquad
W_{k+1}=W_{k,0}+\Delta_k',
\]
with
\[
\Delta_k
=\sum_{i<k} w_{ik}(G_i,U_k)+\sum_{j>k} w_{kj}(U_k,\U_j),
\qquad
\Delta_k'
=\sum_{i<k} w_{ik}(G_i,G_k)+\sum_{j>k} w_{kj}(G_k,\U_j).
\]

Apply Taylor's theorem around $W_{k,0}$:
\[
f(W_{k,0}+\Delta)
=f(W_{k,0})+f'(W_{k,0})\Delta+\frac12 f''(W_{k,0})\Delta^2+R,
\]
where the remainder satisfies
\[
|R|\le \frac{1}{6}\|f^{(3)}\|_\infty\,|\Delta|^3\le \frac{|\Delta|^3}{6}.
\]
Therefore,
\begin{align*}
\E\bigl[f(W_k)-f(W_{k+1})\bigr]
&=\E\Bigl[f'(W_{k,0})(\Delta_k-\Delta_k')
+\frac12 f''(W_{k,0})(\Delta_k^2-(\Delta_k')^2)\Bigr]
+\E(R_k-R_k').
\end{align*}
The first-order term vanishes because, conditioning on all variables except $U_k$ (respectively $G_k$), each summand in $\Delta_k$ and $\Delta_k'$ has conditional mean $0$ by degeneracy. The second-order term vanishes because $G_k$ is chosen with matching covariance $\E(G_kG_k^\top)=\E(U_kU_k^\top)=\bms_U$, so the conditional second moments of $\Delta_k$ and $\Delta_k'$ coincide.

Hence,
\begin{equation}\label{eq:one-step}
\bigl|\E f(W_k)-\E f(W_{k+1})\bigr|
\le \frac{1}{6}\E|\Delta_k|^3+\frac{1}{6}\E|\Delta_k'|^3.
\end{equation}

Using $\E|\Delta|^3\le (\E\Delta^4)^{3/4}$, it suffices to bound $\E\Delta_k^4$ and $\E(\Delta_k')^4$.
Conditional on $U_k$, the summands in $\Delta_k$ are independent across indices (they depend on disjoint collections of $G_i$ and $\U_j$), and have conditional mean $0$. A standard fourth-moment inequality for sums of independent mean-zero variables yields
\[
\E(\Delta_k^4)\le C_0\Bigl(\mathrm{Inf}_k^2+\sum_{j\neq k}\E[w_{kj}(\cdot)^4]\Bigr)
\]
for a universal constant $C_0>0$, where the fourth moments are controlled by $\kappa_4$ through
\[
\E[w_{kj}(U_k,\U_j)^4]=\frac{\E[(U_1^\top U_2)^4]}{\sigma_n^4}
=\kappa_4\,\frac{\tau^2}{\sigma_n^4}
=\kappa_4\,\sigma_{kj}^4.
\]
The Gaussian/mixed terms satisfy analogous hypercontractive bounds, so overall
\[
\E(\Delta_k^4)\le C_1\,\kappa_4\,\mathrm{Inf}_k^2
\]
for a universal constant $C_1>0$, and similarly for $\Delta_k'$. Therefore,
\[
\E|\Delta_k|^3\le (\E\Delta_k^4)^{3/4}\le C_1^{3/4}\kappa_4^{3/4}\mathrm{Inf}_k^{3/2},
\qquad
\E|\Delta_k'|^3\le C_1^{3/4}\kappa_4^{3/4}\mathrm{Inf}_k^{3/2}.
\]
Plugging into \eqref{eq:one-step}, summing over $k$, and using \eqref{eq:inf-sum} yields
\[
\sup_{f\in\Fthree}\bigl|\E f(T_n)-\E f(T_n^{(G)})\bigr|
\le C\,\kappa_4^{3/4}\sum_{k=1}^n \mathrm{Inf}_k^{3/2}
\le C\,\kappa_4^{3/4}\,n^{-1/2},
\]
which is \eqref{eq:univ-bound}. \qed

\subsection{Proof of Theorem~\ref{thm:gauss-reduction}}
Using the analogue of \eqref{eq:Sn-identity},
\begin{equation}\label{eq:SnG-identity}
S_n^{(G)}=\frac12\left\|\sum_{i=1}^n G_i\right\|^2-\frac12\sum_{i=1}^n \|G_i\|^2.
\end{equation}
Let $W_n:=\sum_{i=1}^n G_i$. Then $W_n\sim N(0,n\bms_U)$, so $W_n\stackrel{d}{=}\sqrt{n}\,\bms_U^{1/2}\xi$ and
\begin{equation}\label{eq:Wn}
\frac{1}{2}\|W_n\|^2
=\frac{n}{2}\,\xi^\top \bms_U\xi.
\end{equation}
Also, $\E\|G_i\|^2=\tr(\bms_U)=1$. Write
\[
\sum_{i=1}^n\|G_i\|^2 = n + R_n,\qquad \E R_n=0.
\]
Since $\Var(\|G_1\|^2)=2\tr(\bms_U^2)=2\tau$ for Gaussian $G_1\sim N(0,\bms_U)$,
\[
\Var(R_n)=n\Var(\|G_1\|^2)=2n\tau.
\]
Combining with \eqref{eq:SnG-identity}--\eqref{eq:Wn} gives
\[
S_n^{(G)}
=\frac{n}{2}\bigl(\xi^\top \bms_U\xi-1\bigr)-\frac12 R_n.
\]
Divide by $\sigma_n=\sqrt{\binom{n}{2}\tau}\sim n\sqrt{\tau/2}$:
\[
T_n^{(G)}
=\frac{\xi^\top \bms_U\xi-1}{\sqrt{2\tau}}-\frac{R_n}{2\sigma_n}
=Q_p-\frac{R_n}{2\sigma_n}.
\]
Finally,
\[
\Var\!\left(\frac{R_n}{2\sigma_n}\right)
=\frac{2n\tau}{4\sigma_n^2}
=\frac{n\tau}{2\binom{n}{2}\tau}
=\frac{1}{n-1}\to 0,
\]
so $R_n/(2\sigma_n)=o_p(1)$, proving \eqref{eq:gauss-reduction}. \qed

% -------------------- Lemmas --------------------
We first restate the Lemma S2.4 in \cite{zhao2024spatial} here.
\begin{lemma}\label{lem:theta_rate}
Under Assumptions~\ref{ass:mom_center}--\ref{ass:bahadur}, let $\hat{\boldsymbol{{\boldsymbol{\theta}}}}=\hat{\boldsymbol{\mu}}-\boldsymbol{\mu}$, we have
\[
\|\hat{\boldsymbol{\theta}}\|_2 = O_p(\zeta_1^{-1}n^{-1/2}).
\]
and 
\begin{equation}\label{eq:bahadur}
\hat{\boldsymbol{\theta}}
=
\zeta_1^{-1}\cdot \frac{1}{n}\sum_{i=1}^n \U_i
+\zeta_1^{-1}\varrho,
\qquad \|\varrho\|_2 = O_p(n^{-1}),
\end{equation}
\end{lemma}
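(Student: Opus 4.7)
The plan is to derive the Bahadur expansion from the first-order optimality condition $\sum_{i=1}^n U(\X_i - \hat{\boldsymbol{\mu}}) = \boldsymbol{0}$ for the spatial median. Setting $\boldsymbol{Y}_i := \X_i - \boldsymbol{\mu}$ and $\hat{\boldsymbol{\theta}} := \hat{\boldsymbol{\mu}} - \boldsymbol{\mu}$, I would use the Taylor expansion of the unit-vector map,
\[
U(\boldsymbol{y} - \boldsymbol{t}) = U(\boldsymbol{y}) - \|\boldsymbol{y}\|^{-1}\bigl(I - U(\boldsymbol{y})U(\boldsymbol{y})^\top\bigr)\boldsymbol{t} + R(\boldsymbol{y}, \boldsymbol{t}),
\]
with quadratic remainder $\|R(\boldsymbol{y}, \boldsymbol{t})\| \le C\|\boldsymbol{t}\|^2/\|\boldsymbol{y}\|^2$ valid for $\|\boldsymbol{t}\| \le \|\boldsymbol{y}\|/2$. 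Substituting into the optimality condition converts it into the linearized fixed-point identity
\[
M_n \hat{\boldsymbol{\theta}} = \sum_{i=1}^n \U_i + \sum_{i=1}^n R(\boldsymbol{Y}_i, \hat{\boldsymbol{\theta}}), \qquad M_n := \sum_{i=1}^n r_i^{-1}\bigl(I - \U_i \U_i^\top\bigr),
\]
which is the object to be solved.

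The argument proceeds in two passes. The first pass produces the rate $\|\hat{\boldsymbol{\theta}}\|_2 = O_p(\zeta_1^{-1} n^{-1/2})$. Under Assumption~\ref{ass:mom_center} (which bounds $\zeta_k/\zeta_1^{k}$ up to $k=4$), a truncation-plus-matrix-concentration argument shows that $M_n/n$ is close in operator norm to its mean $\zeta_1(I - \bms_U)$, and Assumption~\ref{ass:bahadur} controls the smallest eigenvalue of this mean from below, yielding $\|M_n^{-1}\|_2 = O_p((n\zeta_1)^{-1})$ on a high-probability event. Combined with the crude bound $\|\sum_i \U_i\|_2 = O_p(\sqrt n)$ and treating the quadratic remainder as lower order, this gives the rate. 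The second pass feeds the rate back into the remainder: $\|\sum_i R(\boldsymbol{Y}_i, \hat{\boldsymbol{\theta}})\|_2 \le C \|\hat{\boldsymbol{\theta}}\|_2^2 \sum_i r_i^{-2} = O_p(\zeta_2/\zeta_1^2) = O_p(1)$ by Assumption~\ref{ass:mom_center}. A Neumann expansion $M_n^{-1} = (n\zeta_1)^{-1}(I - \bms_U)^{-1} + \text{higher order}$ combined with $\|\sum_i\U_i\|_2 = O_p(\sqrt n)$ then produces
\[
\hat{\boldsymbol{\theta}} = \zeta_1^{-1}(I - \bms_U)^{-1}\,\frac{1}{n}\sum_{i=1}^n \U_i + \zeta_1^{-1}\tilde{\varrho}, \qquad \|\tilde{\varrho}\|_2 = O_p(n^{-1}),
\]
and a final step replaces $(I - \bms_U)^{-1}$ by $I$: the correction $\zeta_1^{-1}[(I - \bms_U)^{-1} - I]\bar{\U} = \zeta_1^{-1}\bms_U(I-\bms_U)^{-1}\bar{\U}$ has second moment of order $\zeta_1^{-2}\tau/n$, which is absorbed into $\zeta_1^{-1}\varrho$ in the high-dimensional regime of the paper and yields the stated form.

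The main obstacle is obtaining sharp enough control of $\|M_n - \E[M_n]\|_2$ and of the Neumann-expansion remainder to ensure that all deviations land at the $\zeta_1^{-1}/n$ scale rather than the coarser $\zeta_1^{-1}/\sqrt n$ scale. Because $r_i^{-1}$ is unbounded, a truncation argument calibrated to the moment bounds $\zeta_k \zeta_1^{-k} < \zeta$ of Assumption~\ref{ass:mom_center} must be coupled with matrix Bernstein so that both the truncation bias and the tail contribution to $M_n^{-1}$ vanish at the required rate. Verifying that the $(I-\bms_U)^{-1}\mapsto I$ replacement is genuinely $O_p(\zeta_1^{-1}/n)$—which implicitly constrains the scaling of $\tau$ with $n$ and $p$—is the other delicate point, and is precisely where the spectral control in Assumption~\ref{ass:bahadur} prevents the correction factor from inflating.
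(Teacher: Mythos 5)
The paper does not actually prove Lemma~\ref{lem:theta_rate}: immediately before the statement it says ``We first restate the Lemma S2.4 in \cite{zhao2024spatial} here,'' so the result is imported wholesale from that reference and no proof is given. Your linearization-of-the-estimating-equation strategy is nonetheless the standard route in the spatial-median literature, and the first part of your argument (the rate $\|\hat{\boldsymbol\theta}\|_2 = O_p(\zeta_1^{-1}n^{-1/2})$ via concentration of $M_n/n$ around $\zeta_1(I-\bms_U)$, Assumption~\ref{ass:bahadur} for invertibility, and the crude $O_p(\sqrt n)$ bound on $\sum_i \U_i$) is sound and almost certainly mirrors what the cited source does.

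The Bahadur expansion, however, has a genuine gap at the step you yourself flag as ``delicate.'' Your linearization correctly identifies the leading term as $\zeta_1^{-1}(I-\bms_U)^{-1}\bar{\U}$, not $\zeta_1^{-1}\bar{\U}$. To pass to the stated form you must show that $\zeta_1^{-1}\bms_U(I-\bms_U)^{-1}\bar{\U}$ is $O_p(\zeta_1^{-1}n^{-1})$. A direct variance computation gives
\[
\E\left\|\bms_U(I-\bms_U)^{-1}\bar{\U}\right\|_2^2
=\frac{1}{n}\,\tr\!\left(\bms_U^3(I-\bms_U)^{-2}\right)
\le \frac{(1-\psi)}{\psi^2}\cdot\frac{\tau}{n},
\]
so the correction is $O_p(\sqrt{\tau/n})$, matching your ``second moment of order $\zeta_1^{-2}\tau/n$.'' For this to be $O_p(n^{-1})$ you need $\tau = O(n^{-1})$, but the paper's own Theorem~\ref{thm:wild_bootstrap} (the result this lemma is meant to support) is stated under $n\tau\to\infty$, which is exactly the opposite regime. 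So under the hypotheses actually in force, the replacement $(I-\bms_U)^{-1}\mapsto I$ does \emph{not} land in $\zeta_1^{-1}\varrho$ with $\|\varrho\|_2=O_p(n^{-1})$; it contributes a term of strictly larger order. Assumptions~\ref{ass:mom_center}--\ref{ass:bahadur} alone give no control on $\tau$ beyond $1/p\le\tau\le 1-\psi$, so the step cannot be salvaged without either keeping the $(I-\bms_U)^{-1}$ factor in the expansion or adding an explicit scaling restriction tying $\tau$ to $n$. (A secondary, smaller concern: your Neumann expansion of $M_n^{-1}$ needs a bound on $\|M_n/n-\zeta_1(I-\bms_U)\|_2$ that is sharp enough for the resulting error applied to $\sum_i\U_i$ to be $O_p(\zeta_1^{-1}n^{-1})$; a generic matrix-Bernstein bound typically carries a $\sqrt{\log p}$ factor that you would have to eliminate, e.g.\ by working with Frobenius-type or directional bounds rather than the full operator norm.) In short: your mechanism is right, your rate proof is fine, but the claimed form of the leading term and the $O_p(n^{-1})$ remainder are not established under the lemma's stated assumptions, and your own hedge about ``implicitly constrains the scaling of $\tau$'' is in fact the unclosed hole.
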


%============================================================
% Lemma: Stability of sign-scatter under estimated centering
% Detailed (beginner-friendly) proof using Bahadur expansion
%============================================================
Define
\[
\bms_U := \E(U_iU_i^\top),\qquad 
\tilde\bms_U := \frac1n\sum_{i=1}^n U_iU_i^\top,\qquad
\hat\bms_U := \frac1n\sum_{i=1}^n \hat \U_i\hat \U_i^\top.
\]
\begin{lemma}
\label{lem:Sigma_hat_detailed}
Under Assumptions~\ref{ass:mom_center}--\ref{ass:bahadur}, we have
\begin{equation}\label{eq:trSigma2_rate}
\tr(\hat\bms_U^2)/\tr(\bms_U^2)=1+O_p(n^{-1/2}\tau^{-1/2}),
\end{equation}
\end{lemma}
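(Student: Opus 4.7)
My plan is to split the error into a \emph{sampling error} and a \emph{centering error},
\[
\tr(\hat\bms_U^2)-\tau
=\bigl[\tr(\tilde\bms_U^2)-\tau\bigr]+\bigl[\tr(\hat\bms_U^2)-\tr(\tilde\bms_U^2)\bigr],
\]
and to show that both brackets are $O_p(n^{-1/2}\tau^{1/2})$; dividing by $\tau=\tr(\bms_U^2)$ then yields the claimed $O_p(n^{-1/2}\tau^{-1/2})$ relative rate.

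For the sampling error I would write
\[
\tr(\tilde\bms_U^2)=\frac{1}{n}+\frac{2}{n^2}\sum_{1\le i<j\le n}(U_i^\top U_j)^2,
\]
so the fluctuating part is a degree-two $U$-statistic with kernel $h(u,v)=(u^\top v)^2$ of mean $\tau$. Its Hoeffding projection is $g(u)=u^\top\bms_U u-\tau$. Since $0\le u^\top\bms_U u\le\lambda_{\max}(\bms_U)\le 1$ and $\E[u^\top\bms_U u]=\tau$, one has $\Var(g(U))\le \tau$, while $\E h^2=\kappa_4\tau^2\le\tau$ by Proposition~\ref{prop:kappa4-universal}. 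The standard variance decomposition for degree-two $U$-statistics then yields $\Var(\tr(\tilde\bms_U^2))\lesssim \tau/n$, with a bias of order $(1-\tau)/n$. A Chebyshev argument, together with the regime $n\tau\to\infty$ inherited from Theorem~\ref{thm:wild_bootstrap}, gives this bracket at the desired rate.

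For the centering error I would combine the Bahadur representation in Lemma~\ref{lem:theta_rate} with a first-order Taylor expansion of the spatial-sign map,
\[
U(y-\theta)-U(y)=-\|y\|^{-1}(I-U(y)U(y)^\top)\theta+O\bigl(\|\theta\|^2/\|y\|^2\bigr),
\]
to write $\delta_i:=\hat U_i-U_i=-\nu_i(I-U_iU_i^\top)(\zeta_1\hat{\boldsymbol{\theta}})+\epsilon_i$ with $\|\epsilon_i\|_2=O_p(\nu_i n^{-1})$, and to recall that $\zeta_1\hat{\boldsymbol{\theta}}=\bar U+O_p(n^{-1})$, where $\bar U:=n^{-1}\sum_k U_k$ satisfies $\E\|\bar U\|_2^2=1/n$. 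I would then expand
\[
(\hat U_i^\top\hat U_j)^2-(U_i^\top U_j)^2
=2(U_i^\top U_j)\bigl(U_i^\top\delta_j+\delta_i^\top U_j\bigr)+\text{quadratic in }\delta,
\]
and bound each resulting double sum by Cauchy--Schwarz, using Assumption~\ref{ass:mom_center} to control moments of $\nu_i$ uniformly in $d$.

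The hard part will be this centering step: the $\delta_i$'s are strongly coupled across $i$ through $\hat{\boldsymbol{\theta}}$, so a naive ``$n^2$ terms of size $\|\delta_i\|\,\|U_j\|$'' bound is too lossy by a factor of order $\sqrt{n\tau}$. Extracting the sharp $n^{-1/2}\tau^{1/2}$ rate will require exploiting (i) the identity $(I-U_iU_i^\top)U_i=0$, which annihilates several otherwise-dominant contributions, and (ii) the induced $U$-statistic structure in $\bar U$, so that the leading fluctuations of this bracket match those of the sampling error rather than being an order of magnitude larger. Getting these cancellations right is what will make the whole argument tight.
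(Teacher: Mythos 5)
Your decomposition into a sampling bracket and a centering bracket is broadly the same idea the paper uses, but you work at the trace level, whereas the paper works at the matrix Frobenius-norm level, and this difference is exactly what makes your centering bracket feel hard while the paper's is a one-liner. The paper first shows $\|\tilde\bms_U-\bms_U\|_F=O_p(n^{-1/2})$ (your sampling error, bounded via $\E\|U_1U_1^\top-\bms_U\|_F^2\le1$ rather than a $U$-statistic variance decomposition -- both work, yours is a bit more refined but not needed) and $\|\hat\bms_U-\tilde\bms_U\|_F=O_p(n^{-1/2})$ (your centering error, bounded by the crude term-by-term estimate $\|\hat\bms_U-\tilde\bms_U\|_F\le \tfrac{2}{n}\sum_i\|\hat U_i-U_i\|_2\le C\|\hat\theta\|_2\cdot\tfrac{1}{n}\sum_i r_i^{-1}=O_p(\zeta_1\|\hat\theta\|_2)=O_p(n^{-1/2})$). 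It then converts matrix error to trace error via the polarization identity
\[
\tr(\hat\bms_U^2)-\tr(\bms_U^2)=\tr\bigl((\hat\bms_U-\bms_U)(\hat\bms_U+\bms_U)\bigr)
\le\|\hat\bms_U-\bms_U\|_F\,\bigl(\|\hat\bms_U-\bms_U\|_F+2\|\bms_U\|_F\bigr),
\]
and since $\|\bms_U\|_F=\sqrt\tau$ the factor $\sqrt\tau$ appears automatically, giving $O_p(n^{-1/2}\sqrt\tau)$ and hence the relative rate after dividing by $\tau$.

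The genuine gap in your proposal is that the centering bracket is not actually completed: you flag it as ``the hard part,'' anticipate that naive bounds are too lossy, and say the argument will require exploiting the cancellation $(I-U_iU_i^\top)U_i=0$ and the $U$-statistic structure in $\bar U$, but you do not carry this out. In fact that anticipated difficulty is an artifact of expanding $(\hat U_i^\top\hat U_j)^2-(U_i^\top U_j)^2$ term by term. If you instead bound $\tr(\hat\bms_U^2)-\tr(\tilde\bms_U^2)$ by the same polarization/Cauchy--Schwarz step with $\|\hat\bms_U-\tilde\bms_U\|_F=O_p(n^{-1/2})$ and $\|\hat\bms_U+\tilde\bms_U\|_F=O_p(\sqrt\tau)$, the desired $O_p(n^{-1/2}\sqrt\tau)$ falls out immediately; none of the delicate cancellations you mention are needed. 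You should either switch to the Frobenius-norm route or, if you keep the term-wise expansion, at least state that the first-order term $2(U_i^\top U_j)(U_i^\top\delta_j+\delta_i^\top U_j)$ is controlled by Cauchy--Schwarz in $j$ using $\E|U_i^\top U_j|\lesssim\sqrt\tau$ together with $\|\delta_j\|_2=O_p(\nu_j n^{-1/2})$; as written, the centering step is an unproven claim.

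Two smaller points. First, the lemma as stated is an $O_p$ rate and does not need the condition $n\tau\to\infty$; that condition enters only later in Theorem~\ref{thm:wild_bootstrap}, so invoking it here is unnecessary. Second, your bias accounting $\E[\tr(\tilde\bms_U^2)]-\tau=(1-\tau)/n$ is correct but is already subsumed by the $O_p(n^{-1/2}\sqrt\tau)$ rate once one notes $n^{-1}\le n^{-1/2}\cdot n^{-1/2}$; it does not change the conclusion.
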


\begin{proof}
By the triangle inequality,
\begin{equation}\label{eq:tri_decomp}
\|\hat\bms_U-\bms_U\|_F
\le
\|\hat\bms_U-\tilde\bms_U\|_F
+
\|\tilde\bms_U-\bms_U\|_F.
\end{equation}
Note that
\[
\tilde\bms_U-\bms_U
=
\frac1n\sum_{i=1}^n (U_iU_i^\top-\bms_U),
\]
where $\{U_iU_i^\top-\bms_U\}_{i=1}^n$ are i.i.d.\ with mean zero.
Using $\E\|\cdot\|_F^2$ and independence,
\[
\E\|\tilde\bms_U-\bms_U\|_F^2
=
\frac1n\,\E\|U_1U_1^\top-\bms_U\|_F^2.
\]
We now bound $\E\|U_1U_1^\top-\bms_U\|_F^2$.
Since $\|U_1\|_2\le 1$ and $U_1U_1^\top$ is rank one,
\[
\|U_1U_1^\top\|_F^2
=
\tr(U_1U_1^\top U_1U_1^\top)
=
\tr(U_1U_1^\top)
=
\|U_1\|_2^2
\le 1.
\]
Moreover,
\[
\E\,\tr(U_1U_1^\top\bms_U)
=
\tr\big(\E(U_1U_1^\top)\bms_U\big)
=
\tr(\bms_U^2)
=
\|\bms_U\|_F^2.
\]
Therefore,
\begin{align*}
\E\|U_1U_1^\top-\bms_U\|_F^2
&=
\E\|U_1U_1^\top\|_F^2+\|\bms_U\|_F^2
-2\E\,\tr(U_1U_1^\top\bms_U) \\
&\le 1+\|\bms_U\|_F^2-2\|\bms_U\|_F^2
= 1-\|\bms_U\|_F^2
\le 1.
\end{align*}
Consequently,
\[
\E\|\tilde\bms_U-\bms_U\|_F^2 \le \frac1n,
\]
and by Markov's inequality,
\begin{equation}\label{eq:sampling_rate}
\|\tilde\bms_U-\bms_U\|_F = O_p(n^{-1/2}).
\end{equation}

Write
\[
\hat\bms_U-\tilde\bms_U
=
\frac1n\sum_{i=1}^n\big(\hat \U_i\hat \U_i^\top-U_iU_i^\top\big).
\]

Beause
\begin{equation}\label{eq:centering_reduce}
 \|\hat \U_i-\U_i\|_2\le \frac{C\|\hat{\boldsymbol{\theta}}\|_2}{r_i},
\end{equation}
using the triangle inequality and the bound
$\|uu^\top-vv^\top\|_F\le 2\|u-v\|_2$ for any vectors $u,v$ with $\|u\|_2,\|v\|_2\le 1$,
we have
\begin{equation}\label{eq:centering_rate_pre}
\|\hat\bms_U-\tilde\bms_U\|_F
\le
\frac{2}{n}\sum_{i=1}^n \|\hat \U_i-\U_i\|_2
\le
\frac{C\|\hat{\boldsymbol{\theta}}\|_2}{n}\sum_{i=1}^n \frac{1}{r_i}
=
C\|\hat{\boldsymbol{\theta}}\|_2\cdot \Big(\frac1n\sum_{i=1}^n r_i^{-1}\Big).
\end{equation}
Assumption~\ref{ass:mom_center} ensures that $\E\|Y_1\|_2^{-1}$ exists and is of order $\zeta_1$, so by the law of large numbers,
\begin{equation}\label{eq:inv_norm_lln}
\frac1n\sum_{i=1}^n r_i^{-1}
=
O_p(\zeta_1).
\end{equation}
Therefore \eqref{eq:centering_rate_pre} and \eqref{eq:inv_norm_lln} yield
\begin{equation}\label{eq:centering_rate}
\|\hat\bms_U-\tilde\bms_U\|_F
=
O_p\!\big(\zeta_1\|\hat{\boldsymbol{\theta}}\|_2\big).
\end{equation}
Now apply Lemma~\ref{lem:theta_rate}:
\[
\|\hat{\boldsymbol{\theta}}\|_2 = O_p(\zeta_1^{-1}n^{-1/2}),
\]
which implies
\[
\zeta_1\|\hat{\boldsymbol{\theta}}\|_2 = O_p(n^{-1/2}).
\]
Combining this with \eqref{eq:sampling_rate}, \eqref{eq:tri_decomp}, and \eqref{eq:centering_rate}
gives
\[
\|\hat\bms_U-\bms_U\|_F = O_p(n^{-1/2}).
\]

We use the identity
\[
\tr(\hat\bms_U^2)-\tr(\bms_U^2)
=
\tr\big((\hat\bms_U-\bms_U)(\hat\bms_U+\bms_U)\big).
\]
By Cauchy--Schwarz for the Frobenius inner product,
\begin{align*}
\big|\tr(\hat\bms_U^2)-\tr(\bms_U^2)\big|
&\le
\|\hat\bms_U-\bms_U\|_F\cdot \|\hat\bms_U+\bms_U\|_F\\
&\le  \|\hat\bms_U-\bms_U\|_F(\|\hat\bms_U-\bms_U\|_F+2\|\bms_U\|_F)
\end{align*}
So
\begin{align*}
\frac{\tr(\hat\bms_U^2)}{\tr(\bms_U^2)}-1=O_p(n^{-1/2}\tau^{-1/2}).
\end{align*}
\end{proof}

\subsection{Proof of Lemma \ref{lem:rad_to_gauss}}
\begin{proof}
We use a Lindeberg replacement.
Let $(\xi_1,\ldots,\xi_n)$ be i.i.d.\ Rademacher and $(\eta_1,\ldots,\eta_n)$ be i.i.d.\ $N(0,1)$,
independent of each other and of the data.
For $k=0,1,\ldots,n$, define the hybrid vector
\[
Z^{(k)} := (\eta_1,\ldots,\eta_k,\xi_{k+1},\ldots,\xi_n),
\qquad
W_k := \sum_{i<j} a_{ij} Z^{(k)}_i Z^{(k)}_j .
\]
Then $W_0=W(e)$ and $W_n=W(g)$, and by telescoping,
\[
\Big|\mathbb E^* f(W_0)-\mathbb E^* f(W_n)\Big|
\le \sum_{k=1}^n \Big|\mathbb E^*\{f(W_{k-1})-f(W_k)\}\Big|,
\qquad f\in\mathcal F_3 .
\]
Fix $k$. Write $W_{k-1}=\alpha + \xi_k \beta$ and $W_k=\alpha + \eta_k \beta$, where
$\alpha$ and $\beta$ are functions of $\{Z^{(k)}_\ell:\ell\neq k\}$ and the data.
By Taylor expansion of $f(\alpha+t\beta)$ around $t=0$ up to third order,
and using $\mathbb E(\xi_k)=\mathbb E(\eta_k)=0$, $\mathbb E(\xi_k^2)=\mathbb E(\eta_k^2)=1$,
and $\mathbb E(\xi_k^3)=\mathbb E(\eta_k^3)=0$, we obtain
\[
\Big|\mathbb E^*\{f(W_{k-1})-f(W_k)\}\Big|
\le C\, \mathbb E^*(|\beta|^3)\cdot
\Big|\mathbb E(\xi_k^4)-\mathbb E(\eta_k^4)\Big|
\le C'\, \mathbb E^*(|\beta|^3),
\]
since $\mathbb E(\xi_k^4)=1$ and $\mathbb E(\eta_k^4)=3$.
Now $\beta = \sum_{j\neq k} a_{kj} Z^{(k)}_j$.
By Burkholder/Rosenthal and $\|a_{k\cdot}\|_2^2=\sum_{j\neq k}a_{kj}^2$,
$\mathbb E^*(|\beta|^3)\le C \Big(\sum_{j\neq k} a_{kj}^2\Big)^{3/2}$.
Summing over $k$ yields
\[
\big|\mathbb E^* f(W(e))-\mathbb E^* f(W(g))\big|
\le C \sum_{k=1}^n \Big(\sum_{j\neq k} a_{kj}^2\Big)^{3/2}.
\]
Finally, by Hölder,
$\sum_{k=1}^n (\sum_{j\neq k}a_{kj}^2)^{3/2}
\le (\sum_{k=1}^n \sum_{j\neq k} a_{kj}^2)^{3/4} (\sum_{k=1}^n \sum_{j\neq k} a_{kj}^4)^{1/4}
\asymp (\sum_{i<j} a_{ij}^4)^{1/4}$,
because $\sum_{i<j} a_{ij}^2=1$ by construction.
Taking the supremum over $f\in\mathcal F_3$ gives the first bound.

For the rate, note $|a_{ij}|\le C\tau^{-1/2}/\sqrt{\binom{n}{2}}\le C'n^{-1}\tau^{-1/2}$ and $\sum_{i<j}a_{ij}^2$ is bounded,
hence $\sum_{i<j}a_{ij}^4 \le (\max_{i<j}a_{ij}^2)\sum_{i<j}a_{ij}^2 = O_p(\tau^{-1}n^{-2})$,
so $(\sum_{i<j}a_{ij}^4)^{1/4}=O_p(\tau^{-1/4}n^{-1/2})$.
\end{proof}

\subsection{Proof of Theorem \ref{thm:wild_bootstrap}}

\begin{proof}
Let $\hat W=(\hat w_{ij})_{1\le i,j\le n}$ with $\hat w_{ij}=\hat \U_i^\top \hat \U_j$.
Since $\hat g_{ii}=\|\hat \U_i\|^2=1$,
\[
\sum_{i<j} e_ie_j\hat \U_i^\top \hat \U_j
=\frac12\sum_{i\ne j} e_ie_j\hat w_{ij}
=\frac12\,\mathbf e^\top(\hat W-I_n)\mathbf e,
\]
where $\mathbf e=(e_1,\ldots,e_n)^\top$.
Hence
\[
T_R^*=\mathbf e^\top \hat A\,\mathbf e,
\qquad
\hat A:=\frac{1}{2\sqrt\tau\,\sqrt{\binom{n}{2}}}\,(\hat W-I_n),
\qquad \tr(\hat A)=0.
\]

Let $\mathbf g\sim N(0,I_n)$ independent of the data and define
\[
T_N^*:=\frac{1}{\sqrt\tau}\cdot
\frac{1}{\sqrt{\binom{n}{2}}}\sum_{i<j} g_ig_j\,\hat \U_i^\top \hat \U_j
=\mathbf g^\top \hat A\,\mathbf g.
\]
By Lemma \ref{lem:rad_to_gauss}, the conditional laws of $T_R^*$ and $T_N^*$ are asymptotically equivalent. Therefore, it suffices to prove the claimed limit for $T_N^*$.

Let $\hat\nu_1,\ldots,\hat\nu_n$ be eigenvalues of $\hat A$ and write
$\hat A=Q^\top\diag(\hat\nu_1,\ldots,\hat\nu_n)Q$ for some orthogonal $Q$.
Then, conditionally on the data,
\[
T_N^*=\mathbf g^\top \hat A\,\mathbf g
=\sum_{k=1}^n \hat\nu_k(\xi_k^2-1),
\qquad \xi_k\stackrel{i.i.d.}{\sim}N(0,1),
\]
because $\tr(\hat A)=\sum_k \hat\nu_k=0$.
Moreover,
\[
\Var^*(T_N^*)=2\sum_{k=1}^n \hat\nu_k^2=2\tr(\hat A^2).
\]
Using $\hat A=\{2\sqrt\tau\sqrt{\binom{n}{2}}\}^{-1}(\hat W-I_n)$,
we obtain
\[
2\tr(\hat A^2)
=\frac{\tr(\hat\bms_U^2)}{\tau}
=1+o_{\mathbb P}(1),
\]
where the last step follows Lemma \ref{lem:Sigma_hat_detailed}.

Define normalized coefficients
\[
\hat\beta_k:=\frac{\hat\nu_k}{\sqrt{2\tr(\hat A^2)}},
\qquad \sum_{k=1}^n\hat\beta_k^2=1,
\]
and write
\[
\frac{T_N^*}{\sqrt{\Var^*(T_N^*)}}
=\sum_{k=1}^n \hat\beta_k(\xi_k^2-1).
\]

Taking the same procedure as Lemma S.8 and in \cite{wang2022approximate}, along any subsequence such that $\alpha_i\to\alpha_i^\star$ for each fixed $i$,
one may further select a (non-decreasing) integer sequence $r_n^{\*}\to\infty$
so that (i) $\max_{k>r_n^{\*}}|\hat\beta_k|\to 0$ and
(ii) $\sum_{k=1}^{r_n^{\*}}\hat\beta_k^2\to \sum_{i\ge1}(\alpha_i^\star)^2$.

Then for any fixed $m$,
\[
\sum_{k=1}^m \hat\beta_k(\xi_k^2-1)\;\cd\;
\frac{1}{\sqrt2}\sum_{k=1}^m \alpha_k^\star(\xi_k^2-1),
\]
and the tail part satisfies, by Lindeberg--Feller CLT (since $\max_{k>r_n^{\*}}|\hat\beta_k|\to 0$),
\[
\sum_{k=r_n^{\*}+1}^{n}\hat\beta_k(\xi_k^2-1)
\;\cd\;
\left(1-\sum_{i\ge1}(\alpha_i^\star)^2\right)^{1/2}Z_0,
\qquad Z_0\sim N(0,1),
\]
independent of $(\xi_1,\xi_2,\ldots)$.
Letting $m\to\infty$ yields
\[
\frac{T_N^*}{\sqrt{\Var^*(T_N^*)}}
\cd
\left(1-\sum_{i\ge1}(\alpha_i^\star)^2\right)^{1/2}Z_0
+\frac{1}{\sqrt2}\sum_{i\ge1}\alpha_i^\star(Z_i^2-1).
\]
Finally, since $\Var^*(T_N^*)\to 1$ in probability, Slutsky's theorem gives
$T_N^*\cd T_\infty$ conditionally in probability.

By Lemma \ref{lem:rad_to_gauss}, the conditional laws of $T_R^*$ and $T_N^*$ are asymptotically equivalent,
hence $T_R^*\cd T_\infty$ conditionally in probability.
This completes the proof.
\end{proof}

% ======================================================================
\bibliographystyle{chicago}
\bibliography{ref}

\end{document}